\newtheorem{thm}{{Theorem}}
\algnewcommand{\algorithmicgoto}{\textbf{go to}}%
\algnewcommand{\Goto}[1]{\algorithmicgoto~\ref{#1}}%
\begin{document}
\author{   Eyar Azar, Satish~Mulleti, {\it Member, IEEE}, Yonina C. Eldar, {\it Fellow, IEEE}

	\thanks{\scriptsize E. Azar and Y. C. Eldar are with the Faculty of Math and Computer Science, Weizmann Institute of Science, Israel. S. Mulleti is with the Department of Electrical Engineering, Indian Institute of Technology Bombay, Mumbai, India. Emails: eyar.azar@weizmann.ac.il, mulleti.satish@gmail.com, yonina.eldar@weizmann.ac.il}
	\thanks{\scriptsize {
    This research was partially supported by the Israeli Council for Higher Education (CHE) via the Weizmann Data Science Research Center, by a research grant from the Estate of Tully and Michele Plesser, by the European Union’s Horizon 2020 research and innovation program under grant No. 646804-ERC-COG-BNYQ, by the Israel Science Foundation under grant no. 0100101, and by the QuantERA grant  C'MON-QSENS.}\vspace{-0.3cm}}
}
\title{Robust Unlimited Sampling Beyond Modulo}
\markboth{Submitted to the IEEE Transactions on Signal Processing}%
{Shell \MakeLowercase{\textit{et al.}}: Bare Demo of IEEEtran.cls for Journals}
\maketitle

\begin{abstract}
Analog to digital converters (ADCs) act as a bridge between the analog and digital domains. Two important attributes of any ADC are sampling rate and its dynamic range. For bandlimited signals, the sampling should be above the Nyquist rate. It is also desired that the signals' dynamic range should be within that of the ADC's; otherwise, the signal will be clipped. Nonlinear operators such as modulo or companding can be used prior to sampling to avoid clipping. To recover the true signal from the samples of the nonlinear operator, either high sampling rates are required or strict constraints on the nonlinear operations are imposed, both of which are not desirable in practice. In this paper, we propose a generalized flexible nonlinear operator which is sampling efficient. Moreover, by carefully choosing its parameters, clipping, modulo, and companding can be seen as special cases of it. We show that bandlimited signals are uniquely identified from the nonlinear samples of the proposed operator when sampled above the Nyquist rate. Furthermore, we propose a robust algorithm to recover the true signal from the nonlinear samples. We show that our algorithm has the lowest mean-squared error while recovering the signal for a given sampling rate, noise level, and dynamic range of the compared to existing algorithms. Our results lead to less constrained hardware design to address the dynamic range issues while operating at the lowest rate possible. 
\end{abstract}

\begin{IEEEkeywords}
Modulo sampling, dynamic range, Shannon-Nyquist sampling, unlimited sampling
\end{IEEEkeywords}
\IEEEpeerreviewmaketitle

\section{Introduction}
Sampling plays a crucial role in representing analog signals digitally and processing them efficiently using digital signal processors. Among different sampling techniques, the Shannon-Nyquist sampling framework is widely used. In this framework, bandlimited signals are represented by their instantaneous samples with sampling rate  greater than or equal to the Nyquist rate, which is twice the maximum frequency component. The cost and power consumption of an analog-to-digital converter (ADC) increase with an increase in the sampling rate. Hence, it is desirable to sample closer to the Nyquist rate of the signals.

Apart from sampling rate, another key attribute of an ADC is its dynamic range. Ideally, the dynamic range of an ADC should be larger than that of the input analog signal; otherwise, the signal gets clipped. Clipping is a nonlinear process that results in loss of information. Several approaches have been proposed to address clipping or the dynamic range issue. These approaches can be broadly divided into two categories based on whether preprocessing is applied before sampling. One of the techniques that does not involve a preprocessing step uses the fact that samples of bandlimited signals are correlated when measured above the Nyquist rate. Such correlation among samples are used to retrieve any missing samples \cite{marks_clipping2, marks_clipping1}. To address clipping, the signal is oversampled  and the clipped samples are considered as the missing ones and are then recovered from the remaining unclipped samples. However, theoretical guarantees are lacking for this approach.

An alternative is to use spectral holes in the analog signal. The problem of clipping is prevalent in multiband communication systems where several signals are simultaneously transmitted. This results in high-dynamic-range signals at the receiver and may result in clipping. 
In general, the received signal supposed to have spectral holes, due to to multiband nature. However, due to clipping, the received signal has wider bandwidth and does not have vacant bands.
In \cite{abel_clipping, rietman_clipping}, information about the vacant bands is used to differentiate between the original signal and the clipped ones. In the aforementioned techniques, either large oversampling is required \cite{marks_clipping2, marks_clipping1} or prior knowledge of the vacant bands is needed \cite{abel_clipping, rietman_clipping}. In addition, there are no theoretical guarantees derived for these approaches.

Clipping can be avoided by using an attenuator. In this approach, oversampling is not required. However, natural signals typically consist of a few large-amplitude regions and several regions with low amplitudes. Attenuation may push the low amplitude signals below the noise floor. Attenuators with variable gains, such as automatic gain controls (AGCs) and companders, are used in communication applications to address the dynamic range issue without distorting the small amplitude regions. In AGC, a chain of amplifiers is used with a feedback loop such that a suitable output level is maintained at the output \cite{perez2011automatic,mercy1981review}. The AGC circuit uses a closed-loop feedback mechanism and maintaining stability of the circuit for different signal levels may be difficult.

Companding is an alternative, popular approach with variable gain where smaller amplitudes have larger gain compared to the larger ones. Similar to clipping, companding is a nonlinear operation and increases the bandwidth of the signal. Beurling proved that knowledge of the companded signal over the input signal's bandwidth is sufficient to uniquely identify the signal provided that the compander is a monotone function and its output to finite energy input has finite energy \cite{landau_compander}. This result implies that a companded signal can be sampled at the Nyquist rate by applying an antialiasing filter before sampling. Landau et al. \cite{landau_distorted_bl} proposed an iterative algorithm to recover a bandlimited signal from its companded and lowpass version. The algorithm converges to the true signal provided that the response of the compander is differentiable over the dynamic range of the input signal.

While the aforementioned companding-based approaches operate at a minimal possible sampling rate, the requirement of monotonicity, differentiability, and finite energy output limits their hardware implementation \cite{landau_compander}, \cite{landau_distorted_bl}. Specifically, it is difficult to realize a monotone operator over the entire dynamic range of the signal. An additional approach to companding is to use a modulo operation before sampling to restrict the dynamic range. Specifically, the input signal is folded back when it crosses the dynamic range of the ADC. Hardware realization of such high-dynamic-range ADCs, also known as \emph{self-reset} ADCs, are discussed in the context of imaging \cite{sradc_park, sradc_sasagwa, sradc_yuan, krishna2019unlimited}. Along with samples of the modulo signal, these architectures store side information such as the amount of folding for each sample, or, the sign of the folding. Measuring the side information leads to complex circuitry at the sampler but enables computationally simple recovery. 

Bhandari et al. considered \emph{unlimited sampling}, where the side information is not measured and only folded or modulo samples are used for recovery \cite{unlimited_sampling17, uls_tsp}. The authors showed that for bandlimited signals sampling higher than the Nyquist rate is sufficient to uniquely identify the signal from its modulo samples. An algorithm to determine the true or unfolded samples from the modulo ones is suggested by applying an extension of the Itoh's unwrapping algorithm \cite{itoh}. Specifically, the authors showed that by oversampling the bandlimited signals there exists a positive integer $N$ such that the modulo of the $N$-th order differences of the modulo samples is equal to the $N$-th order differences of the true samples. Once the higher-order differences of the true samples are computed, the true samples are recovered by applying $N$-th order summation. The existence of such $N$ is guaranteed provided that the sampling rate is greater than or equal to $(2\pi e)$-times the Nyquist rate where $e$ is the Euler's constant. That is, an oversampling factor (OF) 17-times is required \cite{unlimited_sampling17, uls_tsp}. In the presence of bounded noise, a much higher OF compared to $(2\pi e)$ is needed \cite{uls_tsp}. In addition, the recovery algorithm is sensitive to noise due to higher-order difference operations.

Romanov and Ordentlich \cite{uls_romonov} improved on the previous results and proposed an algorithm that requires the sampling rate to be slightly above the Nyquist rate. The authors leverage the fact that there exists a time instant beyond which the signal lies within the dynamic range of the ADC. From these unfolded samples, the folded samples are predicted by using the correlation among the samples. However, simulation results of the algorithm are not presented, especially, in the presence of noise. Gan and Liu \cite{uls_multichannel} considered a multichannel extension of modulo sampling. In the absence of noise, the authors showed that two channels, each of them operating at Nyquist rate, are sufficient to undo the modulo operation provided that the dynamic ranges of the two ADCs are coprime. The reconstruction is based on the application of the Chinese remainder theorem. Although perfect reconstruction is achieved by sampling at twice the Nyquist rate, coprime requirements on the dynamic ranges of the ADCs limits its practical application. Modulo sampling is also extended to different problems and signal models such as periodic bandlimited signals \cite{bhandari2021unlimited}, wavelets \cite{uls_wavelet}, mixture of sinusoids \cite{uls_sinmix}, finite-rate-of-innovation signals \cite{uls_fri}, multi-dimensional signals \cite{uls_md}, sparse vector recovery \cite{uls_gamp, uls_sparsevec}, direction of arrival estimation problem \cite{uls_doa}, computed tomography \cite{uls_radon}, and graph signals \cite{uls_graph}. In addition to theory and algorithms, hardware prototypes high-dynamic of range ADCs by using modulo operators are presented in \cite{bhandari2021unlimited,mod_demo,mod_demo2}.

In summary, AGC, companding, and modulo are different ways of addressing the dynamic range issues, there are several drawbacks such as missing theoretical guarantees, stability, requirements of smooth and monotone operators, and algorithms operating at higher sampling rates than the Nyquist rate. In addition, the solutions are developed independently and lack common recovery methods. A single reconstruction algorithm that can recover bandlimited signals from clipped, companded, or modulo samples robustly in the presence of noise and from the minimal sampling rate is lacking.

In this paper, we present a general framework to address dynamic range of the ADCs where all existing approaches can be treated as special cases and provide scope to design new amplitude limiters. Specifically, we consider a non-linear transformation function before sampling with the following desired response to a bandlimited input signal: (a) if the input signal is within the dynamic range of the ADC, then the response remains within the dynamic range and should be invertible; (b) for the part of the input signal beyond the dynamic range of the ADC, the output can take any arbitrary values. Invariability within the dynamic range of the ADC aids in achieving companding with any desired response and uniqueness when recovering the true samples from the non-linear samples. We derive theoretical guarantees and show that sampling above Nyquist rate is necessary and sufficient to recover the samples. Due to the generality of the operator, these guarantees apply to clipping (which were missing in previous works) and companding as well. 
For companding, we do not require any smoothness constraints as in the previous results and hence a wider class of companders can be used.  

We then propose a sampling efficient and robust algorithm to recover the signal. Our algorithm uses the fact that the residual signal, the difference between the true signal and the output of the nonlinear operator, is time-limited for bandlimited signals. Hence, beyond the bandwidth of the signal, one can differentiate between the input signal and the residue. By oversampling the output of the nonlinear operator, we present an approach to recover the residual signal from the nonlinear samples. We show that the proposed algorithm can reconstruct signals from non-linearities such as clipping, modulo operation, and companding. For modulo operation, we compare our algorithm with those in \cite{uls_tsp} and \cite{uls_romonov}. We show that for a given noise level and dynamic range of the ADC, our method can reconstructs the signal for a lower sampling rate in comparison with the existing approaches.

The paper is organized as follows. In Section~\ref{sec:problem_formulation}, we define the class of generalized, non-linear operators considered in this paper and present the problem formulation. Identifiability results are derived in Section~\ref{sec:theory}.
In Section ~\ref{sec:Algo}, we present the proposed algorithm. Simulation results are provided in Section~\ref{sec:simulation} followed by conclusions. 
\begin{figure*}
    \centering
    {\includegraphics[width=6.5 in]{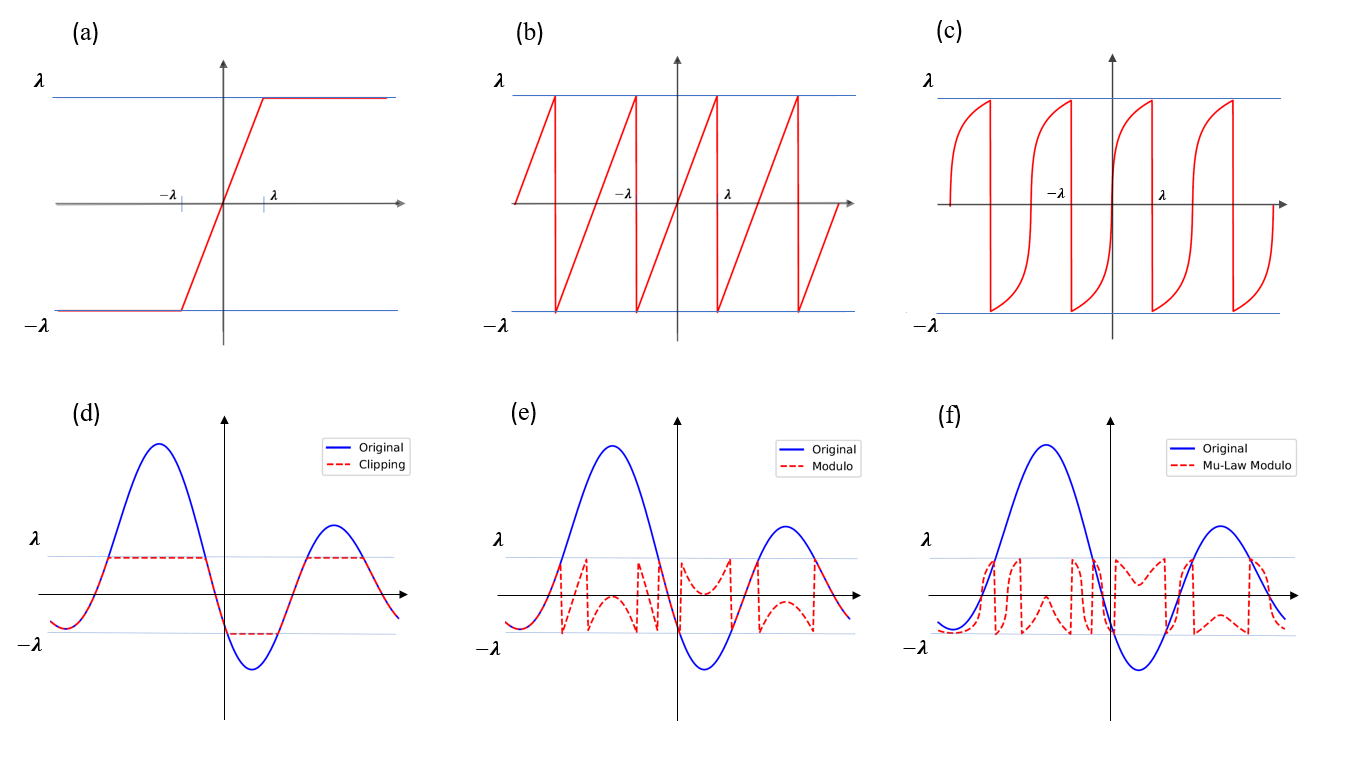}} 
    \caption{Examples of three non-linear functions:(a) Clipping as in \eqref{eq:clipping} (b) $\lambda$-modulo (c) $\mu$-law modulo (d) A bandlimited signal and its clipped version; (e) A bandlimited signal and output of modulo nonlinear operator; and (f) A bandlimited signal and output of a $\mu$-law modulo operator. }
    \label{fig:Non linear functions}
\end{figure*}
\begin{figure}
    \centering
    \includegraphics[width = 2.1in]{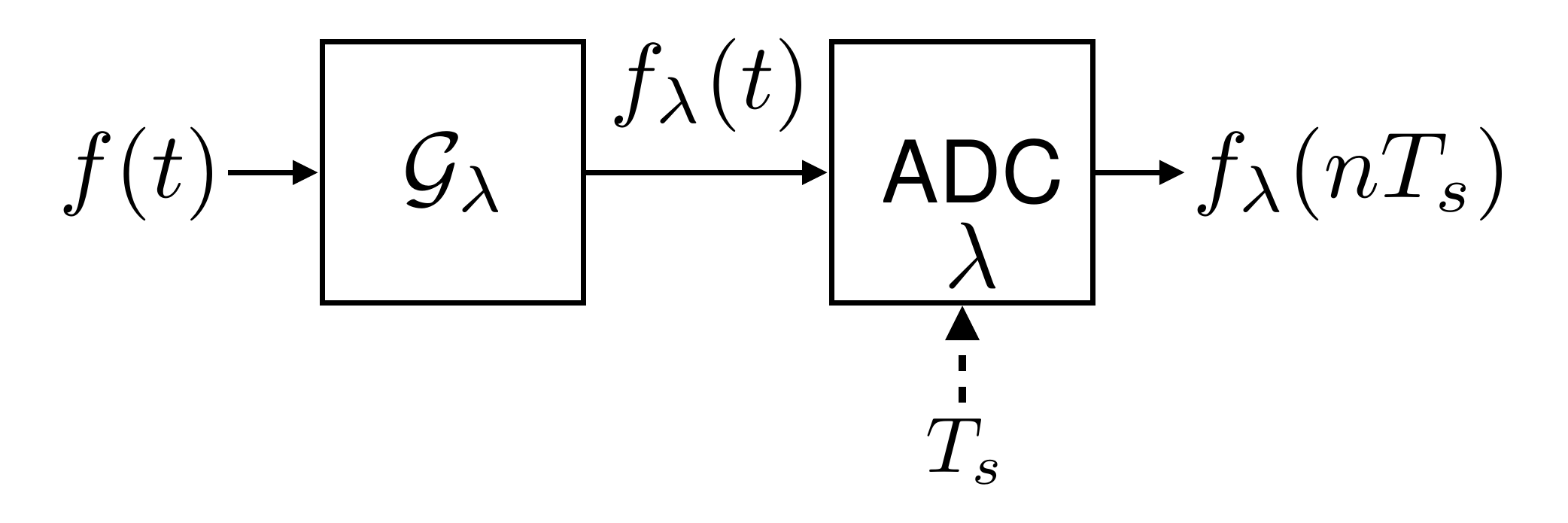}
    \caption{A schematic of generalized sampling: The bandlimited signal $f(t)$ is processed through a non-linear operator $\mathcal{G}_{\lambda}$ and then sampled by the ADC with a sampling interval $T_s$. The dynamic range of the ADC is $[-\lambda, \lambda]$.}
    \label{fig:sampling_block}
\end{figure}

We use the following notations and definitions in the paper. For a continuous-time analog signal $f(t)$ its Fourier transform is denoted as $F(\omega)$. Uniform samples of $f(t)$ are denoted by $f(nT_s)$ where $T_s >0$ is the sampling interval and $n \in \mathbb{Z}$. The corresponding sampling rate is $\omega_s = \frac{2\pi}{T_s}$ rads/sec. 
For a sequence $f(nT_s)$ its corresponding boldfaced symbol $\mathbf{f}$ denotes its vector form with $n$-th entry $\mathbf{f}[n] = f(nT_s)$.
The discrete-time Fourier transform (DTFT) is defined as
\begin{align}
    \label{eq:DTFT}
    \mathcal{F}\mathbf{f}= F(e^{\mathrm{j}\omega T_s}) = \sum_{n \in \mathbb{Z}}f(nT_s)\, e^{-\mathrm{j}\omega nT_s}.
\end{align}
For any interval $\rho \subset (-\omega_s/2,\,\omega_s/2)$, $\mathcal{F}_{\rho}\mathbf{f}$ denotes a partial DTFT $F(e^{\mathrm{j}\omega T_s})$ evaluated over $\omega \in \rho$, and  $\mathcal{F}^*_{\rho}$ denotes the adjoint operator of $\mathcal{F}_{\rho}$. Specifically, we have 
\begin{align}
    \label{eq: adjoint_partiel_DTFT}
    \mathcal{F}^*_{\rho}\mathcal{F}\mathbf{f}[n] = \frac{T_s}{2\pi}\int_\rho {F(e^{\mathrm{j}\omega T_s})\, e^{\mathrm{j}\omega n T_s} \mathrm{d}\omega}, \quad n\in \mathbb{Z}.
\end{align}
For any integer $N$, $\mathcal{S}_{N}$ denotes the space of sequences that have support over $\{-N, \cdots, N\}$, and $P_{\mathcal{S}_N}$ denotes the orthogonal projection onto the space ${\mathcal{S}_N}$ which sets all samples beyond $\{-N, \cdots, N\}$ to zero. 
The indicator function on domain $\mathcal{A}$ is denoted by $\mathbf{1}_{\mathcal{A}}(\cdot)$ .
The symbol $\mathrm{B}_{\omega_m}$ denotes the space of analog signals that are bandlimited to frequency interval $[-\omega_m, \omega_m]$. The sinc function is defined as $\text{sinc}(x) = \frac{\sin(\pi x)}{\pi x}$. For any two functions $g(t)$ and $f(t)$, a composite function is denoted as $g \circ f(t)$. For any $a \in \mathbb{R}$ and $\lambda \in \mathbb{R}^+$, the modulo operation $\mathcal{M}_{\lambda}(\cdot)$ is given as
	\begin{equation}
	\mathcal{M}_{\lambda}(a) = (a+\lambda)\,\, \text{mod}\,\, 2\lambda -\lambda.	
	\label{eq:mod_def}
	\end{equation}


\section{Problem Statement}
\label{sec:problem_formulation}
\subsection{Preliminaries}

Consider a signal $f(t) \in \mathrm{B}_{\omega_m}$, an ADC with dynamic range $[-\lambda, \lambda]$, and sampling interval $T_s$. If the uniform samples $f(nT_s)$ are beyond the dynamic range of the ADC then they are clipped. Specifically, the output samples of the ADC $f_\lambda(nT_s)$ are given as
\begin{align}
    f_{\lambda}(nT_s) = \begin{cases}
    -\lambda, & f(nT_s) \leq -\lambda,\\
    f(nT_s), & |f(nT_s)| < \lambda,\\
    \lambda, & f(nT_s) \geq \lambda.
    \end{cases}
    \label{eq:clipping}
\end{align}

Clipping results in loss of information and generally requires high amount of oversampling to estimate $f(nT_s)$ from $f_\lambda(nT_s)$ for all $n \in \mathbb{Z}$ \cite{marks_clipping2, marks_clipping1}. To avoid clipping either instantaneous companding or modulo operations are used before sampling which limits the dynamic range of the signal. Instantaneous companding uses a nonlinear, monotone function $\mathcal{G}:\mathbb{R} \rightarrow \mathbb{R}$ such that $\mathcal{G}f(t) \in [-\lambda, \lambda]$ \cite{landau_compander}. One can recover $f(nT_s)$ from $\mathcal{G}f(nT_s)$ by sampling at the Nyquist rate. In addition, $\mathcal{G}$ boosts low amplitudes of the signal to improve the signal to noise ratio (SNR) which helps in accurate recovery. Existing companders are required to be monotone, differentiable, and $\mathcal{G}f(t) \in L^2(\mathbb{R})$ which limits their practical application \cite{landau_compander, landau_distorted_bl}. An alternative to avoid clipping is to perform a modulo operation prior to sampling, that is, sample $\mathcal{M}_{\lambda}f(t)$ instead of $f(t)$ \cite{uls_tsp}. As in companding we have that $\mathcal{M}_{\lambda}f(t) \in [-\lambda, \lambda]$. The existing algorithms to determine $f(nT_s)$ from $\mathcal{M}_{\lambda}f(nT_s)$, either operate at very high sampling rate \cite{uls_tsp} or are unstable in the presence of noise \cite{uls_romonov}. 

Our objective is to devise a non-linear operation, that has the advantages over the existing approaches such as companding and modulo, and existence of a robust practical recovery algorithm that operates at a rate closer to the Nyquist rate. 
To this end, we consider the following non-linear operator:
\begin{align}
\label{eq:non-linear-operator}
    \mathcal{G}_{\lambda}f(t) = \begin{cases}
    \text{arbitrary}, & |f(t)| > \lambda,\\
    g \circ f(t), & |f(t)| \leq \lambda,
    \end{cases}
\end{align}
where $g: [-\lambda, \lambda] \to [-\lambda, \lambda]$ is a known, memoryless, continuous, and invertible function. As we show later, our recovery does not depend on the response of the nonlinear operator for $|f(t)|>\lambda$ and hence we chose an arbitrary response. 

Both clipping and modulo operators are special cases of the operator $\mathcal{G}_{\lambda}$. To illustrate this,
three examples of $\mathcal{G}_{\lambda}$ are demonstrated in Fig.~\ref{fig:Non linear functions} together with their responses to a bandlimited signal. Fig.~\ref{fig:Non linear functions}(a) and Fig.~\ref{fig:Non linear functions}(b) illustrate the output of clipping (cf \eqref{eq:clipping}) and modulo $\mathcal{M}_{\lambda}$, respectively. In these two special cases, the function $g(t)$ is identity, that is, $g \circ f(t) = f(t)$, for $|f(t)|\leq \lambda$. For $|f(t)|>\lambda$, $\mathcal{G}_{\lambda}f(t) = \text{sgn}(f(t))\, \lambda$ for clipping and $\mathcal{G}_{\lambda}f(t) = \mathcal{M}_{\lambda}f(t)$ for modulo. Their outputs to a bandlimited signal are displayed in Fig.~\ref{fig:Non linear functions}(d) and Fig.~\ref{fig:Non linear functions}(e). Fig.~\ref{fig:Non linear functions}(c) shows the output of a operator consists of a $\mu$-law operator\footnote{A $\mu$-law operator is used for companding. Its response to a function $f(t)$ is given as $\displaystyle \text{sgn}(f(t)) \frac{\ln\left( 1+\mu |f(t)|/\|f(t)\|_{\infty}\right) }{\ln (1+\mu)}$ where $\mu >0$.} followed by modulo operations. Its output to a bandlimited signal is exhibited in Fig.~\ref{fig:Non linear functions}(f) where amplitudes closer to zero are amplified. Both clipping and modulo operation are well known in the literature. However, the $\mu$-law modulo operator is a novel one which is a combination of compander and modulo operators. These examples demonstrate that by careful selection of $g(t)$ and the response of the operator for $|f(t)|>\lambda$, different non-linear functions could be realized. 
In addition, it can be shown that companders, such as $\mu$-law and $A$-law, are special cases of the generalized operator.

\subsection{Problem Formulation}
Consider a bandlimited signal $f(t)$, non-linear operator, $\mathcal{G}_{\lambda}$, which operates on $f(t)$, and then sampled using an ADC with dynamic range $[-\lambda, \lambda]$ and sampling rate $\frac{1}{T_s}$.
The overall sampling scheme is shown in Fig. \ref{fig:sampling_block}.
Since the ADC clips signals beyond its dynamic range, we can assume that output of the operator is followed by a clipping operation prior to ADC. Hence, if we consider the response of the generalized operator together with the explicit clipping (due to ADC), we have that
\begin{align}
    |\mathcal{G}_{\lambda}f(t)| \leq \lambda.
    \label{eq:bounded_g}
\end{align}
In other words, the outcome of $G_{\lambda}f(t)$ is bounded to $[-\lambda, \lambda]$.



Our goal is to derive conditions on the sampling rate such that the signal $f(t)$ is uniquely identified from the \emph{non-linear} or \emph{folded} samples $f_{\lambda}(nT_s)$. Note that if one recovers the \emph{unfolded} or \emph{true} samples $f(nT_s)$ from $f_{\lambda}(nT_s)$ then $f(t)$ can be uniquely reconstructed from $f(nT_s)$ provided the sampling rate is greater than or equal to the Nyquist rate. In the rest of the discussion, we assume that there exist one or more samples such that $|f(nT_s)|> \lambda$. The assumption ensures that there are folded samples on which unfolding methods can be applied.


Non-linear operators prior to sampling as shown in Fig.~\ref{fig:sampling_block} are not new in the sampling literature. For example, Zhu \cite{Zhu_nonlinear} considers a non-linear sampling framework where an operator is used to convert an arbitrary signal to a bandlimited one. The framework enables the extension of the Shannon-Nyquist sampling framework to non-bandlimited functions. In another line of work, Dvorkind et al. \cite{tsvika_nonlinear} considered a non-linear operator together with a non-ideal sampling setup. The framework reflects the practical scenario where the measurement devices have inherent nonlinearities and they may not be measuring exact instantaneous (or ideal) samples. The authors derive conditions on the non-linearity and input signal model for perfect recovery. In addition, practical algorithms were discussed for signal reconstruction. Unlike these earlier works \cite{Zhu_nonlinear, tsvika_nonlinear} the operators considered in this work are not necessarily invertible.

In the next section, we derive identifiability results (which are independent of any recovery algorithm) for recovering bandlimited signals from samples of the non-linear operator. In Section ~\ref{sec:Algo} we present the proposed algorithms to recover the signal from minimal samples.

\section{Theoretical guarantees}
\label{sec:theory}
In this section, we derive necessary and sufficient conditions to uniquely identify a bandlimited function from its samples measured via the non-linear operator $\mathcal{G}_\lambda$. 

Our main results are summarized in the following theorem.
\begin{thm}[Identifiability conditions]
\label{theorem:necessary_identifiability}
    Consider the sampling scheme shown in Fig.~\ref{fig:sampling_block} where the operator $\mathcal{G_{\lambda}}$ is defined as in \eqref{eq:non-linear-operator}. Then any signal $f(t) \in L^2(\mathbb{R})\cap B_{\omega_{m}}$ is uniquely identifiable from its non-linear samples $\{f_{\lambda}(nT_s)\}$ iff sampling is performed above the Nyquist rate, that is, $T_s < \frac{\pi}{\omega_m}$.
    
\end{thm}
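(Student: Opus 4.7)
The plan is to prove sufficiency and necessity separately, with the sufficiency direction being the substantive one.

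For sufficiency ($T_s < \pi/\omega_m$), I would argue by contradiction. Suppose two signals $f_1, f_2 \in L^2(\mathbb{R}) \cap B_{\omega_m}$ produce identical non-linear samples. The first observation is that any $L^2$ bandlimited function has an $L^1$ Fourier transform (by Cauchy–Schwarz on the compact spectral support), so by Riemann–Lebesgue both $f_1$ and $f_2$ lie in $C_0(\mathbb{R})$ and decay to zero at infinity. Consequently, there exists $N$ such that $|f_i(nT_s)| \le \lambda$ for both $i=1,2$ whenever $|n|>N$. At such indices the observed sample equals $g(f_1(nT_s)) = g(f_2(nT_s))$, and since $g$ is continuous and invertible on $[-\lambda,\lambda]$, injectivity forces $f_1(nT_s)=f_2(nT_s)$. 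Thus the difference $d = f_1 - f_2 \in B_{\omega_m}$ satisfies $d(nT_s) = 0$ for all $|n| > N$, i.e.\ its sample sequence has finite support.

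The next step leverages oversampling. Because $T_s < \pi/\omega_m$ means $\pi/T_s > \omega_m$, there is no aliasing, so the DTFT of $\{d(nT_s)\}$ equals $(1/T_s)\,\hat{d}(\omega)$ on $(-\pi/T_s,\pi/T_s)$ and vanishes on the positive-measure set $(\omega_m, \pi/T_s) \cup (-\pi/T_s, -\omega_m)$ because $\hat d$ is supported in $[-\omega_m,\omega_m]$. But this DTFT is a trigonometric polynomial in $e^{\mathrm{j}\omega T_s}$ (since only finitely many samples of $d$ are nonzero), hence an entire function of $\omega$; an entire function vanishing on a set of positive Lebesgue measure is identically zero. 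Therefore $\hat d \equiv 0$ and $f_1 \equiv f_2$. The heart of the argument is this coupling of three ingredients: invertibility of $g$ on $[-\lambda,\lambda]$, $L^2$-bandlimited decay, and analytic continuation from oversampling.

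For necessity, I would exhibit a counterexample at $T_s \ge \pi/\omega_m$. At the Nyquist rate $T_s = \pi/\omega_m$, the kernel $h(t) = \mathrm{sinc}(t/T_s)$ lies in $B_{\omega_m}\cap L^2$ and satisfies $h(0)=1$, $h(nT_s)=0$ for $n\neq 0$. Picking any two scalars $A\neq A'$ with $A,A'>\lambda$, the signals $f_1 = A h$ and $f_2 = A' h$ agree at every non-zero sample index (both give $0$, and hence the observed sample is $g(0)$), while at $n=0$ both inputs exceed $\lambda$ in magnitude, so the operator output lies in the ``arbitrary'' branch of \eqref{eq:non-linear-operator} and can be chosen (for instance, in the clipping or modulo special cases) to coincide for the two signals. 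Thus distinct $f_1,f_2$ produce identical non-linear samples. For $T_s > \pi/\omega_m$, uniqueness already fails at the linear sampling level: one can construct a nonzero $d \in B_{\omega_m} \cap L^2$ with $\|d\|_\infty$ arbitrarily small and $d(nT_s)=0$ for all $n$ (via standard aliasing), so that $f_1=0$ and $f_2=d$ remain inside $[-\lambda,\lambda]$ and yield identical samples $g(0)$ under $\mathcal{G}_\lambda$. The main obstacle I expect is the final analyticity step in sufficiency, since it requires carefully justifying that the finiteness of nonzero samples promotes the DTFT to an entire function and that vanishing on an interval then forces global vanishing.
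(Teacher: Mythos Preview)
Your sufficiency argument is essentially identical to the paper's: contradiction via Riemann--Lebesgue decay, invertibility of $g$ on $[-\lambda,\lambda]$ to force the two sample sequences to agree outside a finite window, and then the observation that the DTFT of the finitely supported difference sequence is a trigonometric polynomial vanishing on the nonempty interval $(\omega_m,\omega_s/2)$, hence identically zero by the identity theorem. Your sub-Nyquist necessity via aliasing is likewise the paper's argument. The analyticity step you flag as the ``main obstacle'' is in fact routine; the real gap lies elsewhere.

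Your Nyquist-rate necessity has a genuine hole. You take $f_1 = Ah$ and $f_2 = A'h$ with both $A,A' > \lambda$ and say the outputs at $n=0$ ``can be chosen \ldots\ to coincide'' because both inputs fall into the arbitrary branch of $\mathcal{G}_\lambda$. But $\mathcal{G}_\lambda$ is \emph{fixed} in the theorem: ``arbitrary'' in \eqref{eq:non-linear-operator} means the result must hold for every admissible memoryless operator, not that you may select its values after picking $A,A'$. If the fixed $\mathcal{G}_\lambda$ happens to be injective on $(\lambda,\infty)$---as with a strictly monotone compander mapping $\mathbb{R}$ into $(-\lambda,\lambda)$---then $\mathcal{G}_\lambda(A) \neq \mathcal{G}_\lambda(A')$ and your counterexample collapses. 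The paper repairs this by placing the second signal's critical sample \emph{inside} $[-\lambda,\lambda]$, where the operator's action is known: it sets $\hat f(n_0T_s) = g^{-1}\!\big(\mathcal{G}_\lambda f(n_0T_s)\big)$, so that $\mathcal{G}_\lambda \hat f(n_0T_s) = g\big(g^{-1}(\mathcal{G}_\lambda f(n_0T_s))\big) = \mathcal{G}_\lambda f(n_0T_s)$ regardless of how $\mathcal{G}_\lambda$ behaves for $|v|>\lambda$. In your sinc framework the fix is simply to take $A' = g^{-1}(\mathcal{G}_\lambda(A))$, which automatically lies in $[-\lambda,\lambda]$ and is therefore distinct from $A$.
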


\begin{proof}
    See Appendix.
\end{proof}

Theorem~\ref{theorem:necessary_identifiability} implies that it is necessary and sufficient to sample above the Nyquist rate to uniquely identify a bandlimited signal from the samples of the non-linear operator $\mathcal{G}_\lambda$. Since the modulo operator $\mathcal{M}_{\lambda}$ is a special case of $\mathcal{G}_\lambda$, the result holds true for the modulo operator. Particularly, in terms of sampling rate, our sufficiency results are similar to that in \cite{uls_tsp} and hence Theorem~\ref{theorem:necessary_identifiability} is consistent with existing results. 
Note that, to the best of our knowledge, the necessary condition of sampling above the Nyquist rate has been proved for the first time in this work.
Our identifiability result depends only on the samples of $g \circ f(t)$ and not on the measurements of $\mathcal{G}_\lambda f(t)$ for $|f(t)| >\lambda$.


Since both clipping and companding are particular instances of the proposed operator, the results also hold for them. Hence, theoretically, it is possible to recover the clipped samples if they are measured over the Nyquist rate. In the case of companding, unlike Beurling's results \cite{landau_compander}, our guarantees do not require the operator to be smooth and hence extend the results to a broader class of companders.   
  
\section{A Robust and Lowrate Recovery Algorithm}
\label{sec:Algo}
    We next present an iterative algorithm for recovery of the samples $f(n T_s)$ from the non-linear samples $f_{\lambda}(n T_s) = \mathcal{G}_{\lambda} f(n T_s)$. The algorithm assumes that the sampling rate is greater than the Nyquist rate, that is, $\omega_s > 2 \omega_m$. The underlying principle for the algorithm is to use the out-of-band energy of the non-linear samples to reconstruct the residual signal. The residual signal is the difference between the true signal and non-linear ones. For this reason, we refer to the proposed algorithm as \emph{beyond bandwidth residual reconstruction ($B^2R^2$)}. For ease of discussion, we first present the algorithm for the modulo operator and then extend it to the general non-linear operator. The modulo setting is also considered in \cite{moduloicassp}.

    \subsection{$B^2R^2$ Algorithm for Modulo Operator}
        For $\mathcal{G}_{\lambda} = \mathcal{M}_{\lambda}$, the modulo samples are expressed as a linear combination of the true samples and a residual signal:
        \begin{align}
            f_\lambda(nT_s) = f(nT_s) + z(nT_s),
            \label{eq:resiudual}
        \end{align}
        where values of the residual sequence $z(nT_s)$ are integer multiples of $2\lambda$. Our approach is to first
        compute $z(nT_s)$ from the modulo samples $f_\lambda(nT_s)$ and then use \eqref{eq:resiudual} to determine $f(nT_s)$. To derive $z(nT_s)$ from $f_\lambda(nT_s)$, we use the following two properties of the finite energy bandlimited signals to separate $f(nT_s)$ from $f_\lambda(nT_s)$.
        \begin{itemize}
            \item Time-domain separation \cite{uls_romonov}: From the Riemann-Lebesgue Lemma it can be shown that $\lim_{|t| \to \infty}f(t) = 0$. This implies that for any $\lambda >0$ there exists an integer $N_{\lambda}$ such that $|f(nT_s)| < \lambda,$ for all $|n|> N_{\lambda}$. Hence, for $|n|> N_{\lambda}$, we have $f_\lambda(nT_s) = f(nT_s)$ and $z(nT_s) = 0$. Thus the modulo samples are equal to the true samples over a set of indices and they are used to distinguish the residual from the modulo samples in time. 
            \item Fourier-domain separation: Since the signal is sampled above the Nyquist rate,
            \begin{equation}
                F(e^{\mathrm{j}\omega T_s}) = 0, \quad \text{for} \quad  \omega_m < |\omega| < \omega_s / 2.
                \label{eq: zero_dtft0}
            \end{equation}
            By using the linearity of DTFT, from \eqref{eq:resiudual} we have that
            \begin{equation}
                F_\lambda(e^{\mathrm{j}\omega T_s}) = Z(e^{\mathrm{j}\omega T_s}), \quad \text{for} \quad  \omega_m < |\omega| < \omega_s / 2.
            \label{eq: zero_dtft1}
            \end{equation}
            This implies that one can differentiate the DTFT of the true samples and that of the residual by sampling above the Nyquist rate and looking beyond the bandwidth. 
        \end{itemize}
        
        In the rest of the discussion, we assume that $N_{\lambda}$ is known. From the time-domain separation, we infer that the residual signal has finite support on the integer set $\mathcal{N}_{\lambda} = \{-N_\lambda, \cdots, N_{\lambda}\}$. Combining the time-domain and the frequency-domain separations we arrive at the following relation:
        \begin{equation}
                F_{\lambda}(e^{\mathrm{j}\omega T_s}) = \sum_{n={-N_\lambda}}^{N_\lambda}z(n T_s)e^{-\mathrm{j} n T_s\omega},
                \label{eq:combine}
        \end{equation}
        for $\omega_m < |\omega| < \omega_s / 2$. Due to its finite support, the DTFT of $z(nT_s)$ is a trigonometric polynomial and it is given over an interval. 
        
        \subsubsection{A simple matrix-inversion-based solution}
        From \eqref{eq:combine} one can determine $z(nT_s)$ by sampling $F_{\lambda}(e^{\mathrm{j}\omega T_s})$ at $2N_\lambda +1$ points over the interval $\rho = (-\omega_s/2, -\omega_m) \cup (\omega_m, \omega_s/2)$ and inverting the resulting set of linear equations. The matrix that relates $z(nT_s)$ and samples of $F_{\lambda}(e^{\mathrm{j}\omega T_s})$ will have a Vandermonde structure with size of $(2N_\lambda +1) \times (2N_\lambda +1)$. The Vandermonde matrix is invertible if the $2N_\lambda +1$ points over the interval $\rho = (-\omega_s/2, -\omega_m) \cup (\omega_m, \omega_s/2)$ are unique. From the recovered residual signal, the true samples $f(nT_s)$ are determined by using \eqref{eq:resiudual}. In principle, the approach is similar to the algorithm proposed in \cite{bhandari2021unlimited} for periodic bandlimited signals. 
        
        Although the proposed approach looks simple, the matrix inversion used for estimating $z(nT_s)$ from the samples of $F_\lambda(e^{\mathrm{j}\omega T_s})$ may be unstable for large values of $N_\lambda$. To illustrate this consider $f(t) = \text{sinc}(t)$ where $\omega_m = 2\pi$. We consider its samples measured at a rate of $12 \omega_m$, that is, with an oversampling factor of 6. We used a modulo operator to limit the dynamic range before sampling and consider reconstruction by Vandermonde matrix inversion for $\lambda = 0.25$ and $\lambda = 0.2$. The true signals and the reconstructed signals are shown in Fig.~\ref{fig:Vand}. For $\lambda = 0.25$, $N_\lambda = 4$ and we observe perfect reconstruction, whereas, for $\lambda = 0.2$, $N_\lambda$ is 9 and perfect recovery is not achieved as shown in Fig.~\ref{fig:Van_lambda020}. In the following, we discuss an iterative algorithm that does not require matrix inversion and can reconstruct signals for larger values of $N_\lambda$. 
        \begin{figure}[!t]
    	    \begin{center}
        		\begin{tabular}{cc}
        			\subfigure[$\lambda = 0.25, N_\lambda = 4 $]{\includegraphics[width=1.6in]{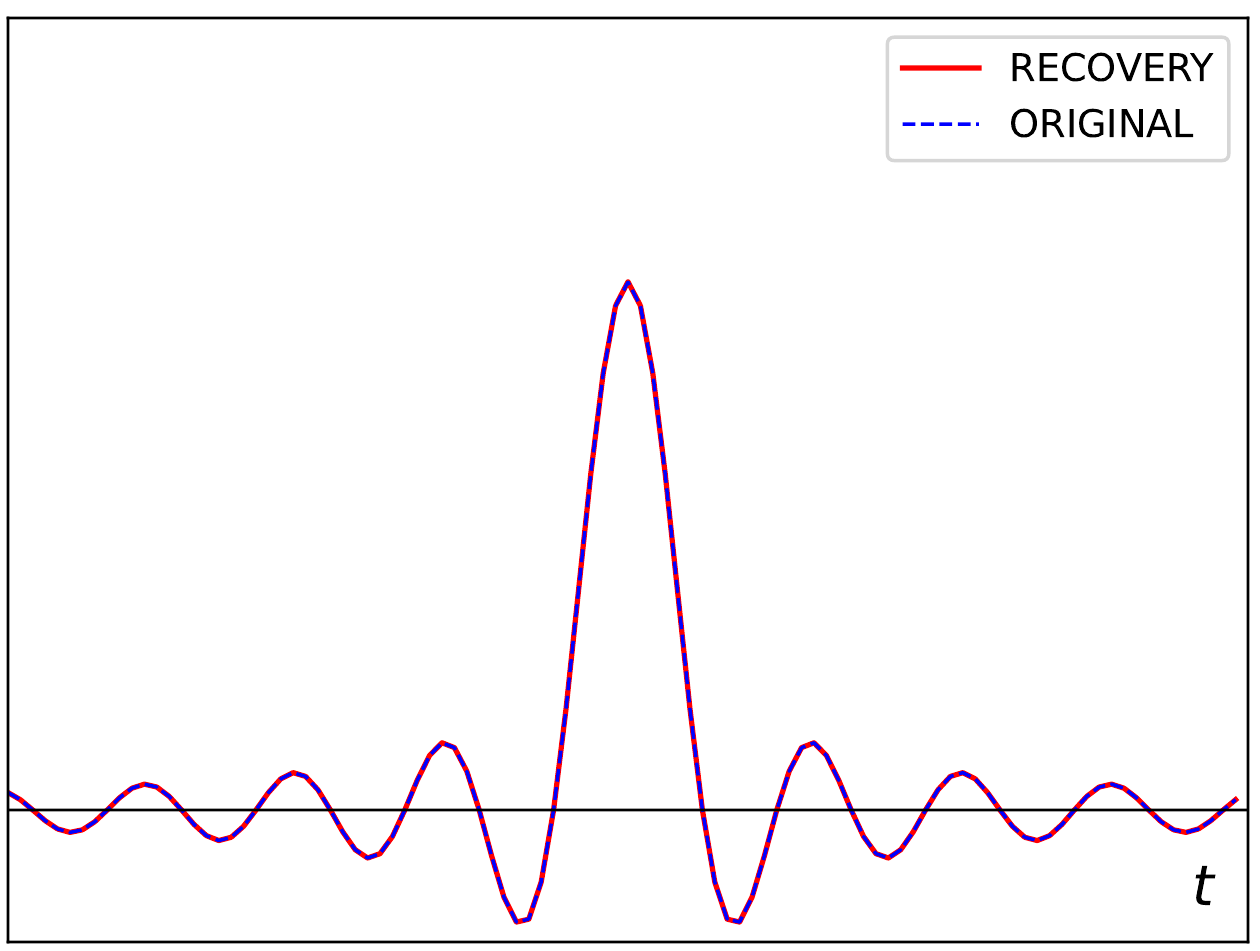}\label{fig:Van_lambda025}} 
        			\subfigure[$\lambda = 0.20, N_\lambda = 9 $]{\includegraphics[width=1.6in]{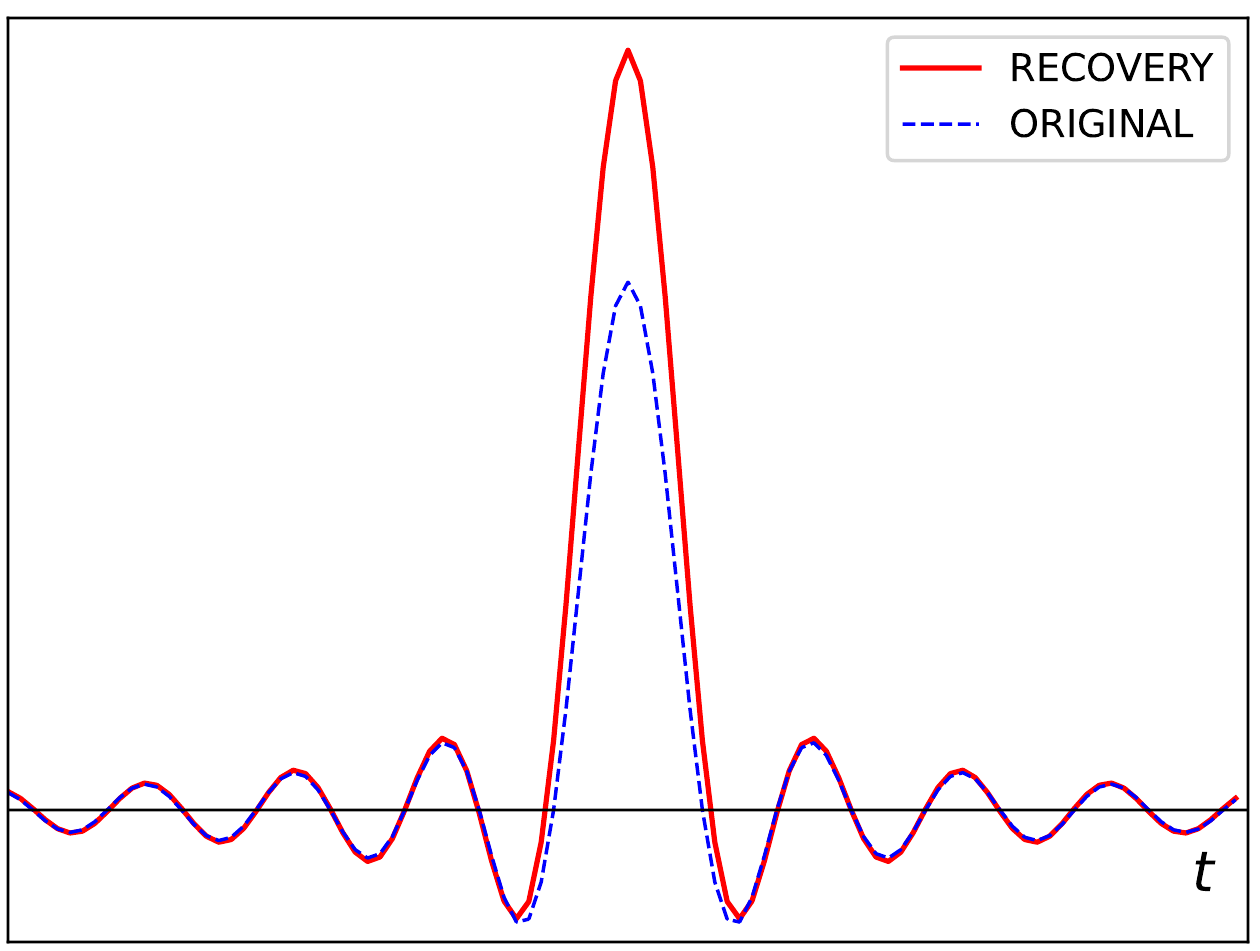}\label{fig:Van_lambda020}} 
        			\end{tabular}
        		\caption{Reconstruction of bandlimited signals from its modulo samples by using Vandermonde inverse: (a) Perfect reconstruction and (b) Imperfect recovery. As $N_\lambda$ increases, the matrix inversion becomes unstable and perfect recovery is not achieved.}
        		\label{fig:Vand}
        	\end{center}
        \end{figure}


        \subsubsection{$B^2R^2$: An iterative, optimization-based, computationally efficient solution}
        Here we propose an iterative algorithm that does not require any matrix inversion. The iterative algorithm is a solution to an optimization problem as discussed in the following. By using the operator and vector notations we rewrite \eqref{eq: zero_dtft1} as 
        \begin{align}
            \mathcal{F}_{\rho}\mathbf{f}_{\lambda} =  \mathcal{F}_{\rho}\mathbf{z},
        \label{eq:equal_vector}
        \end{align}
        where $\rho = (-\omega_s/2, -\omega_m) \cup (\omega_m, \omega_s/2)$. Since $z(nT_s)$ is time-limited to $\mathcal{N}_{\lambda}$, we have that 
        \begin{align}
            \mathbf{z} \in \mathcal{S}_{N_\lambda}.
       \label{eq:constarint}
        \end{align}
        Given the data-fitting term in \eqref{eq:equal_vector} and support constraint, recovery of $\mathbf{z}$ can be written as the following optimization problem:
        \begin{align}
            \underset{\mathbf{z}}{\min} \quad \mathrm{C}(\mathbf{z}) = \dfrac{1}{2}\|\mathcal{F}_{\rho}\mathbf{f}_{\lambda} - \mathcal{F}_{\rho}\mathbf{z}\|^2 \quad \text{s.t.} \quad \mathbf{z} \in \mathcal{S}_{N_\lambda}.
        \label{eq:opt2}
        \end{align}
    
        Problem \eqref{eq:opt2} can be solved using a projected gradient descent (PGD) method where at each iteration the solution iterates towards the negative gradient of the cost $\mathrm{C}(\mathbf{z})$ and is then projected onto the space $\mathcal{S}_{N_\lambda}$. In summary, starting from an initial point $\mathbf{z}^0 \in \mathcal{S}_{N_\lambda}$, the steps at the $k$-th iteration are
        \begin{equation}
            \begin{split}
                \mathbf{y}^{k} &= \mathbf{z}^{k-1} - \gamma_k \nabla \mathrm{C}(\mathbf{z}^{k-1}) \\
                \mathbf{z}^k &= P_{\mathcal{S}_{N_{\lambda}}}(\mathbf{y}^k),
            \end{split}
         \label{eq:pgd}
        \end{equation}
        where $\gamma_k >0$ is a suitable step-size, $\nabla \mathrm{C}(\mathbf{z}) = \mathcal{F}^*_{\rho}\mathcal{F}_{\rho}{(\mathbf{z} - \mathbf{f}_{\lambda})}$ is the gradient of $\mathrm{C(\mathbf{z})}$ and $P_{S_{N_\lambda}}(\cdot)$ is the orthogonal projection onto $\mathcal{S}_{N_{\lambda}}$. The operator $\mathcal{F}^*_{\rho}\mathcal{F}_{\rho}$ is a highpass operation. The sequence $\mathcal{F}^*_{\rho}\mathcal{F}_{\rho}(\mathbf{z} - \mathbf{f}_{\lambda})$ can be computed by filtering the sequence $\mathbf{z} - \mathbf{f}_{\lambda}$ with an ideal highpass filter with spectral support over $\rho$. Both the steps \eqref{eq:pgd} do not require any matrix inversion and hence instability and computational infeasibility for large $N_\lambda$ do not arise.

        In the case of modulo operation, the residual signal has an additional structure that every element of $\mathbf{z}$ is in $2\lambda \mathbb{Z}$. This constraint can be used after the support constraint in each step of the algorithm.
        
        We observed that, the rounding operation followed by PGD gives a good recovery of $\mathbf{z}$ from the modulo samples for small $N_{\lambda}$, whereas, for large $N_{\lambda}$ the estimation 
        tends to be more accurate at the edges of the support. 
        Using this observation, we propose a sequential approach to improve the accuracy of estimation of the remaining samples.
        Starting from a given $N_\lambda$, let the PGD algorithm estimate of $\mathbf{z}$ be $\mathbf{\hat{z}}$. The estimate has support over $\mathcal{N}_\lambda$ and its values are integer multiples of $2\lambda$. In the absence of noise, 
        \begin{align}
            \mathbf{z}[n] = \mathbf{\hat{z}}[n], \quad n = \pm N_\lambda.
            \label{eq:edge_samples}
        \end{align}
        To estimate the remaining samples of $\mathbf{z}$ accurately, we define another sequence as 
        \begin{align}
          \label{eq:rec_signal}
          \mathbf{\hat{f}} = \mathbf{f_\lambda} - \mathbf{\hat{z}}.
        \end{align}
        Combining \eqref{eq:resiudual} and \eqref{eq:rec_signal},
        \begin{align}
        \label{eq:new_prob}
           \mathbf{\hat{f}}  = \mathbf{f} +\mathbf{z} - \mathbf{\hat{z}}.
        \end{align}
        From \eqref{eq:edge_samples} and \eqref{eq:new_prob} we have that $\mathbf{\hat{f}}[n] = \mathbf{f}[n],\, |n|>N_\lambda-1$. As a result, the new residual sequence $\mathbf{z} - \mathbf{\hat{z}}$ has support over $\{-(N_\lambda-1), \cdots, (N_\lambda-1)\}$, that is, $\mathbf{z} - \mathbf{\hat{z}} \in \mathcal{S}_{N_\lambda -1}$ and $\mathbf{z} - \mathbf{\hat{z}} \in 2\lambda \mathbb{Z}$. Hence, $\mathbf{\hat{f}}$ has a similar decomposition as in \eqref{eq:resiudual} except for the fact that its values need not be in the range $[-\lambda, \lambda]$. Despite that, we can redefine the optimization problem as in \eqref{eq:opt2} to estimate $\mathbf{z} - \mathbf{\hat{z}}$ from $\mathbf{\hat{f}}$ and use the PGD iterations as in \eqref{eq:pgd} to solve it. The residue $\mathbf{z} - \mathbf{\hat{z}}$ is correctly estimated for $n = \pm (N_\lambda-1)$, from which $\mathbf{f}$ can be determined at those locations. The process is repeated until all the samples are estimated. The algorithm, refereed as $B^2R^2$, is summarized in Algorithm~\ref{alg:algorithm_new}. For initialization, one can set $\mathbf{z}^0$ as $P_{\mathcal{S}_{N_{\lambda}}}\{\mathcal{F}^*_{\rho}\mathcal{F}_{\rho} \mathbf{f}_{\lambda}\}$. This is inverse-partial DTFT of $\mathcal{F}_\rho\mathbf{z}$ and we found that it serves as a good initial point. To illustrate this, we consider the same example as shown in Fig \ref{fig:Vand}. We observe that, unlike the matrix inversion method, the $B^2R^2$ algorithm achieves perfect reconstruction for $\lambda = 0.25, 0.20$ as shown in Fig \ref{fig:bbrr}.

        The proposed algorithm (Algorithm~\ref{alg:algorithm_new}) uses time-domain separation and frequency-domain separation properties to determine the residual signal. Whereas, the algorithm proposed in \cite{uls_romonov} uses these separation properties to directly predict the true samples from the folded ones. Specifically, the samples $f(nT_s), $ for all $ |n|\leq N_\lambda$ are predicted from $f(nT_s), $ for all $ |n| > N_\lambda$. Hence, both algorithms are entirely different although they use the same properties.
        \begin{figure}
        \begin{center}
		\begin{tabular}{cc}
			\subfigure[$\lambda = 0.25, N_\lambda = 4 $]{\includegraphics[width=1.6in]{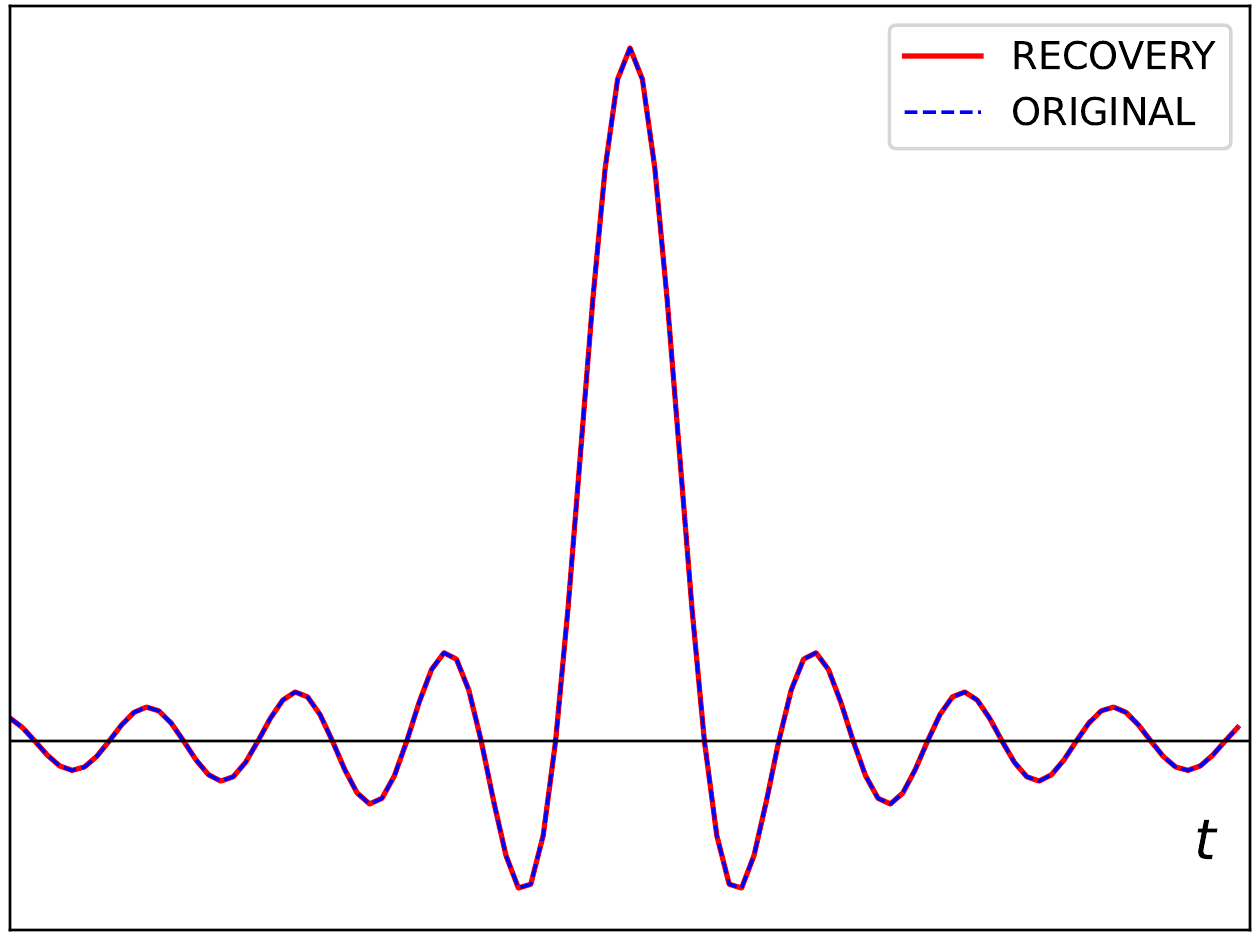}\label{fig:bbrr_lambda025}} 
			\subfigure[$\lambda = 0.20, N_\lambda = 9 $]{\includegraphics[width=1.6in]{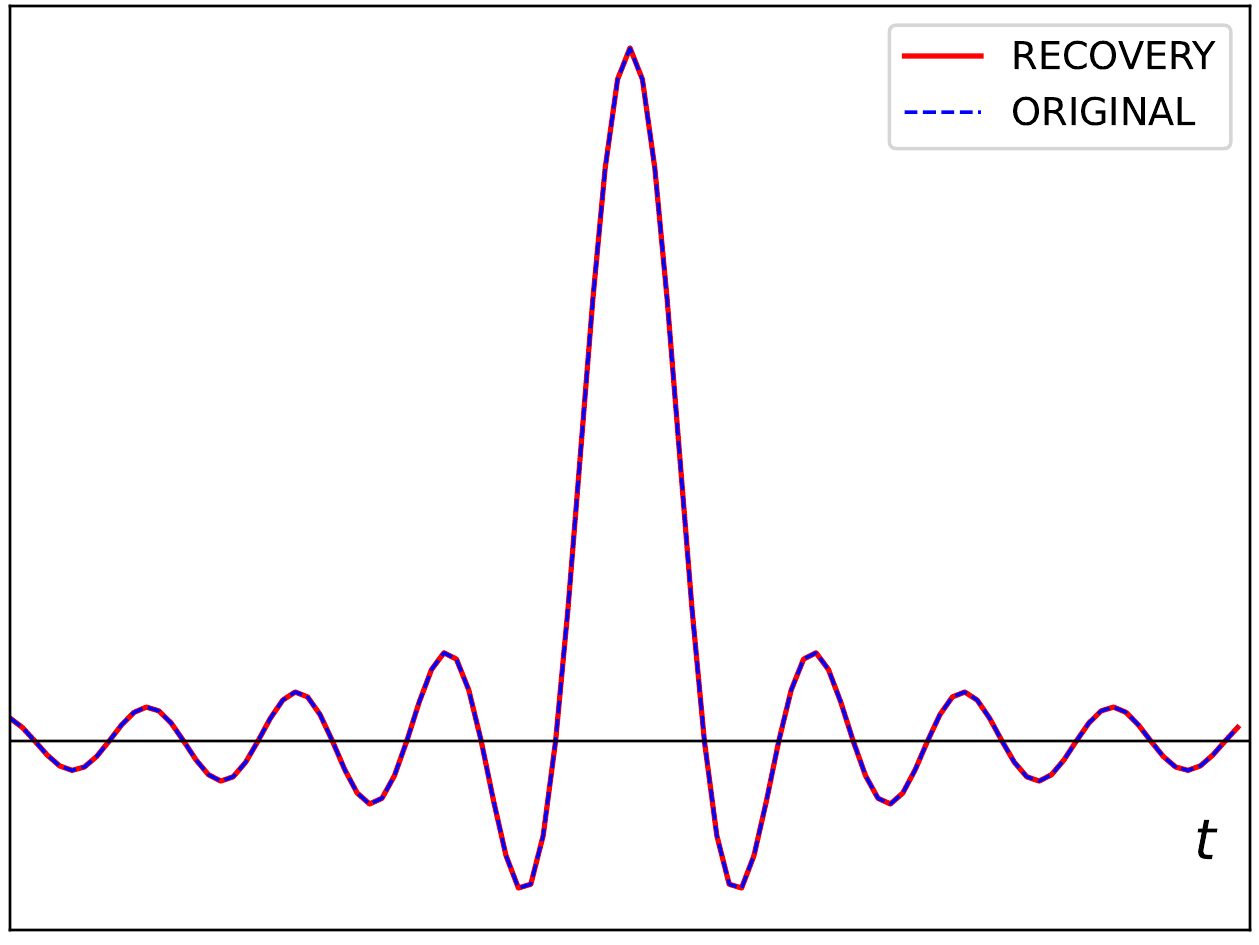}\label{fig:bbrr_lambda020}} 
			\end{tabular}
		\caption{Reconstruction of bandlimited signals from its modulo samples by using $B^2R^2$ algorithm: Perfect reconstruction is achieved for both $N_\lambda = 4$ and $N_\lambda = 9$.}
		\label{fig:bbrr}
	    \end{center}
	   \end{figure}

        \begin{algorithm}[!t]
            \caption{$B^2R^2$ for recovery of BL signals from modulo     samples}\label{alg:algorithm_new}
            \begin{algorithmic}[1]
                \State {\bf Input}$f_{\lambda}(n T_s)$ or $\mathbf{f}_{\lambda}$, $\lambda$, $\rho$ and $N_{\lambda}$
                \State \textbf{Intialize:} $\mathbf{\hat{f}} = \mathbf{f_\lambda}$, $\mathbf{z}^0 \in S_{N_\lambda}$ 
                \While{$N_\lambda >0$} 
                     \For{$k = 1$, $k{+}{+}$, Until stopping criteria}
                        \State Choose step size $\gamma_k$ by backtracking line search
                        \State $\mathbf{y}^k = \mathbf{z}^{k-1} - \gamma_k \mathcal{F^*}_{\rho}\mathcal{F}_{\rho}{(\mathbf{z}^{k-1} -\mathbf{\hat{f}})}$
                        \State $\mathbf{z}^k = P_{S_{N_\lambda}}(\mathbf{y^k})$
                    \EndFor
                    \State $\mathbf{\hat{z}} = \mathbf{z}^k$, \Comment{Estimation after applying PGD algorithm}
                    \State $\mathbf{\hat{z}} \leftarrow \left \lceil \frac{\lfloor \mathbf{\hat{z}} / \lambda \rfloor}{2} \right \rceil $  \Comment{rounding to $2\lambda \mathbb{Z}$},
                    \State $\mathbf{\hat{f}} \leftarrow \mathbf{\hat{f}} - \mathbf{\hat{z}}$
                    \State $N_{\lambda} \leftarrow N_{\lambda}-1$
                    \State $\mathbf{z}^0 =  P_{S_{N_\lambda}}(\mathbf{\hat{z}})$ 
              \EndWhile\label{euclidendwhile}
              \Statex \textbf{Output:}
                $\mathbf{f} = \mathbf{\hat{f}}$
          \end{algorithmic}
        \end{algorithm}
        
     
      \begin{figure*}[!t]
        \centering
        {\includegraphics[width=7 in]{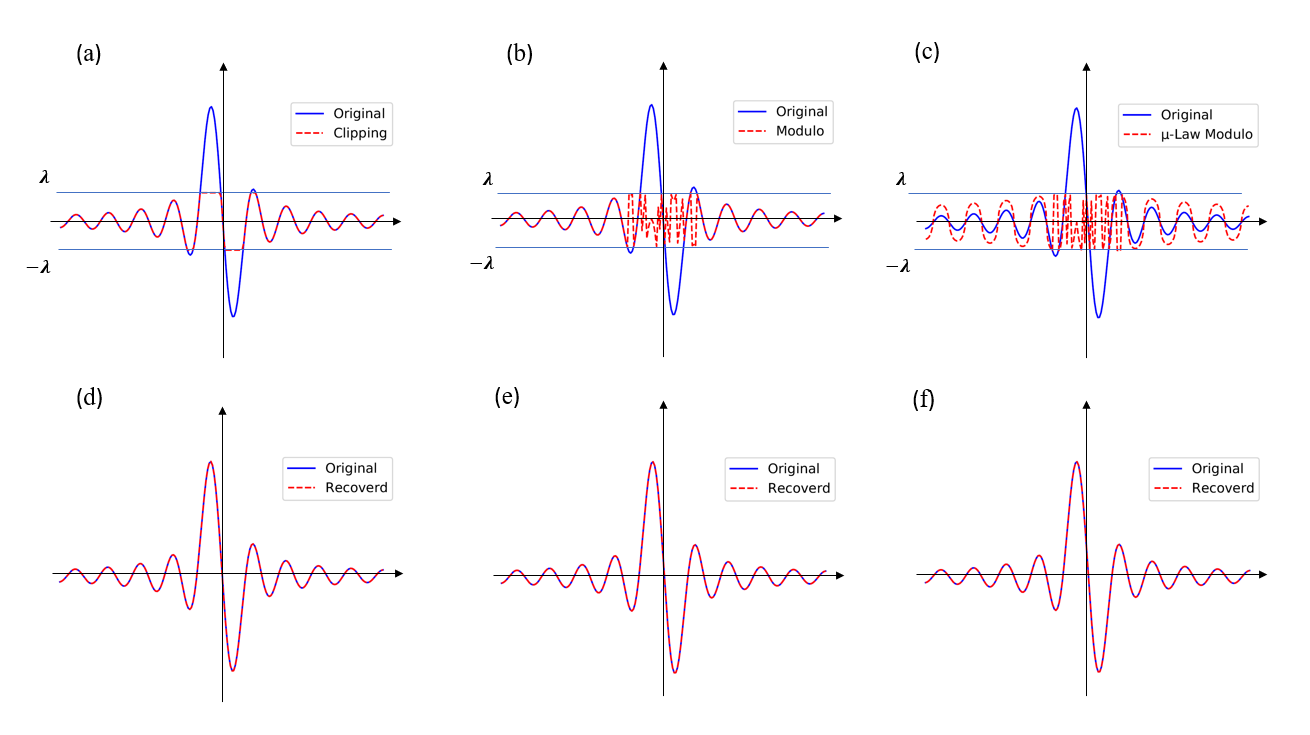}} 
        \caption{Reconstruction of bandlimited signals from non-linear samples by using $B^2R^2$ algorithm; Top row shows original signals and outputs of (a) clipping, (b) modulo operation, and (c) $\mu$-Law modulo; Bottom row shows recovery by using  $B^2R^2$ algorithm from samples of (d) clipping, (e) modulo operator, and (f) $\mu$-Law modulo recovery.}
    \label{fig:recovery_no_noise}
    \end{figure*}
    \subsection{$B^2R^2$ algorithm for general operator}    
         Here, we consider the general case when non-linear samples are given by the operator  $\mathcal{G}_{\lambda}$ as defined in \eqref{eq:non-linear-operator}. In this case, we define the residual signal as
         \begin{align}
         \label{eq:general residual}
            z(n T_s) = g^{-1} \circ f_{\lambda} (n T_s) - f(n T_s) = u(n T_s) - f(n T_s), 
         \end{align}
         where $u(n T_s) = g^{-1} \circ f_{\lambda} (n T_s)$.
         From the time and frequency separation we obtain that
         \begin{equation}
                U(e^{\mathrm{j}\omega T_s}) = \sum_{n={-N_\lambda}}^{N_\lambda}z(n T_s)e^{-\mathrm{j} n T_s\omega},
                \label{eq:combine_general}
        \end{equation}
        for $\omega_m < |\omega| < \omega_s / 2$.
        To estimate $z(nT_s)$ from $U(e^{\mathrm{j}\omega T_s})$ we consider an optimization framework as we did in the modulo case,
        \begin{align}
            \underset{\mathbf{z}}{\min} \quad \mathrm{C}(\mathbf{z}) = \dfrac{1}{2}\|\mathcal{F}_{\rho}\mathbf{u} - \mathcal{F}_{\rho}\mathbf{z}\|^2 \quad \text{s.t.} \quad \mathbf{z} \in \mathcal{S}_{N_\lambda}.
        \label{eq:opt2_general}
        \end{align}
    
        Problem \eqref{eq:opt2_general} can be solved using a PGD method as described in the modulo case. In summary, starting from an initial point $\mathbf{z}^0 \in \mathcal{S}_{N_\lambda}$, the steps at the $k$-th iteration are given in \eqref{eq:pgd}. 
        
        Although, most of the steps of the $B^2R^2$ algorithm for general operator remains same as in  Algorithm~\ref{alg:algorithm_new} but two steps make the difference. The first one is the initialization $\mathbf{\hat{f}}$. For general operator we initialize as
        \begin{align}
            \mathbf{\hat{f}} = g^{-1} \circ \mathbf{f_\lambda}.
            \label{eq:gen_init}
        \end{align}
        The second difference is in imposing structure of $\mathbf{z}$ to improve its accuracy. In Step-10 of Algorithm~\ref{alg:algorithm_new}, we use the fact that elements of $\mathbf{z}$ should be an integer multiple of $2\lambda$. Similarly, for different operators $\mathcal{G}_{\lambda}$ the residual signal $\mathbf{z}$ could have additional structure.

        For example, in clipping, if $f(nT_s)\geq \lambda$ then $f_\lambda(nT_s) = \lambda$. Hence, $z(nT_s) = f_\lambda(nT_s)- f(nT_s) \leq 0$. Similarly, $z(nT_s)\geq 0$ when $f_\lambda(nT_s) = -\lambda$. Hence sign of $z(nT_s)$ can be determined from the clipped samples. 
        To this end, by using the structure in the sign of the $z(nT_s)$ we can improve its estimation as
        
        \begin{align}
            \hat{z}(nT_s) &= \min(0, \hat{z}(nT_s)) \,  \mathbf{1}_{[f_\lambda(nT_s) =\lambda]}  \nonumber \\
            &+ \max(0, \hat{z}(nT_s)) \, \mathbf{1}_{[f_\lambda(nT_s) = -\lambda]}  \nonumber \\
            &+ 0 \, \mathbf{1}_{[ -\lambda <f_\lambda(nT_s) <\lambda]} .
            \label{eq:clip_approx}
        \end{align}
        For clipping, the three conditions $ f_\lambda(nT_s) =\lambda ,  f_\lambda(nT_s) = -\lambda $, and $ \lambda <f_\lambda(nT_s) <\lambda $ are mutually exclusive and one of them is always true for any folded sample. In \eqref{eq:clip_approx}, we set the values of $z(nT_s)$ to zero when the sign conditions are not satisfied. 
        Therefore, for clipping, we replace the operation in Step-10 of $B^2R^2$ algorithm with the approximation in  \eqref{eq:clip_approx}.

        Hence in $B^2R^2$ algorithm for general operators we use the initialization in \eqref{eq:gen_init} together with suitable approximation in Step-10 should be used in Algorithm~\ref{alg:algorithm_new}. This generalized algorithm is designed for a general operator where clipping, companding, and modulo are special cases, it can recover bandlimited signals from samples of all these operators. 
        Importantly, the change in an ADC (together with the operator), does not require change in the algorithm.

\section{Simulations Results}
\label{sec:simulation}
    In this section, we present numerical results of different methods for recovering a bandlimited signal from the nonlinear samples. We first consider recovery in the absence of noise by using the proposed $B^2R^2$ algorithm. We then treat the noisy setting where we compare the proposed and existing approaches for reconstructing signals from modulo samples. We demonstrate the robustness to noise of the $B^2R^2$ algorithm for different parameters of $\lambda$ and the over-sampling factor.

    \subsection{Signal Reconstruction From Non-Linear Samples in the Absence of Noise }
    In this experiment, our goal is to demonstrate that the $B^2R^2$ algorithm perfectly reconstructs bandlimited signals from different nonlinear samples. Specifically, we consider the non-linearities discussed in Fig.~\ref{fig:Non linear functions}, namely, clipping (cf. \eqref{eq:clipping}), modulo operation as in \eqref{eq:mod_def}, and a $\mu$-law modulo operator. Let $\lambda  = 0.25$ for all these operators. Figs.~\ref{fig:recovery_no_noise}(a), (b), and (c) depict a bandlimited signal (in blue) and outputs of non-linear operators (in red). The true signals with corresponding recovered signals are shown in Figs.~\ref{fig:recovery_no_noise}(d), (e), and (f). For reconstruction from clipped samples, we used an $\text{OF} = 10$ and rest of the two operators $\text{OF} = 2$ was used. This shows that it is difficult to reconstruct from the clipped samples compared to modulo samples. Overall, we observe that the $B^2R^2$ algorithm recovers the original signal perfectly from the samples of different non-linear operators. 

    \begin{figure}[!h]
        \centering
        \begin{tabular}{c c}
          \subfigure[HOD ]{\includegraphics[width=1.6 in]{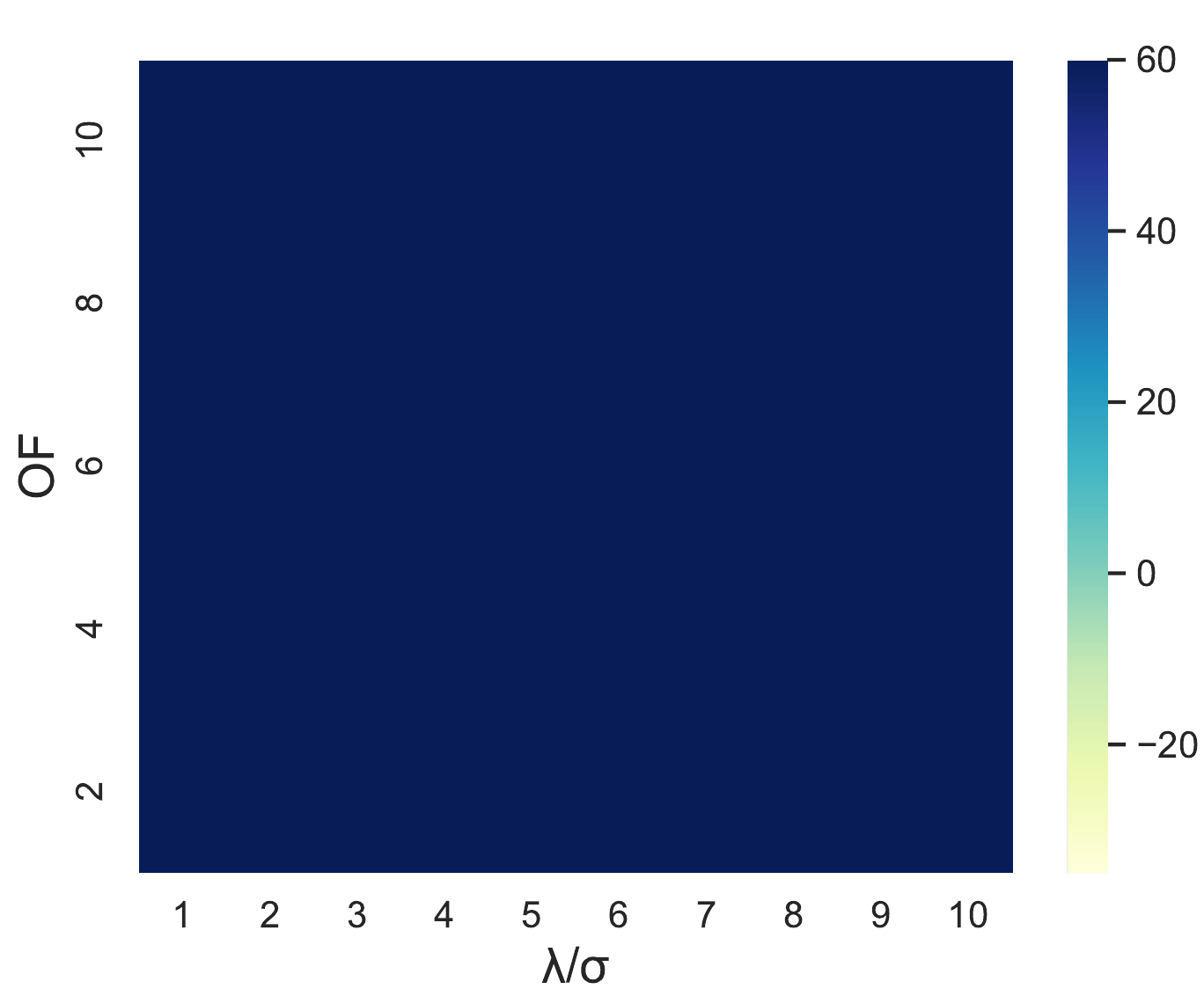}}   &  \subfigure[CPF]{\includegraphics[width=1.6 in]{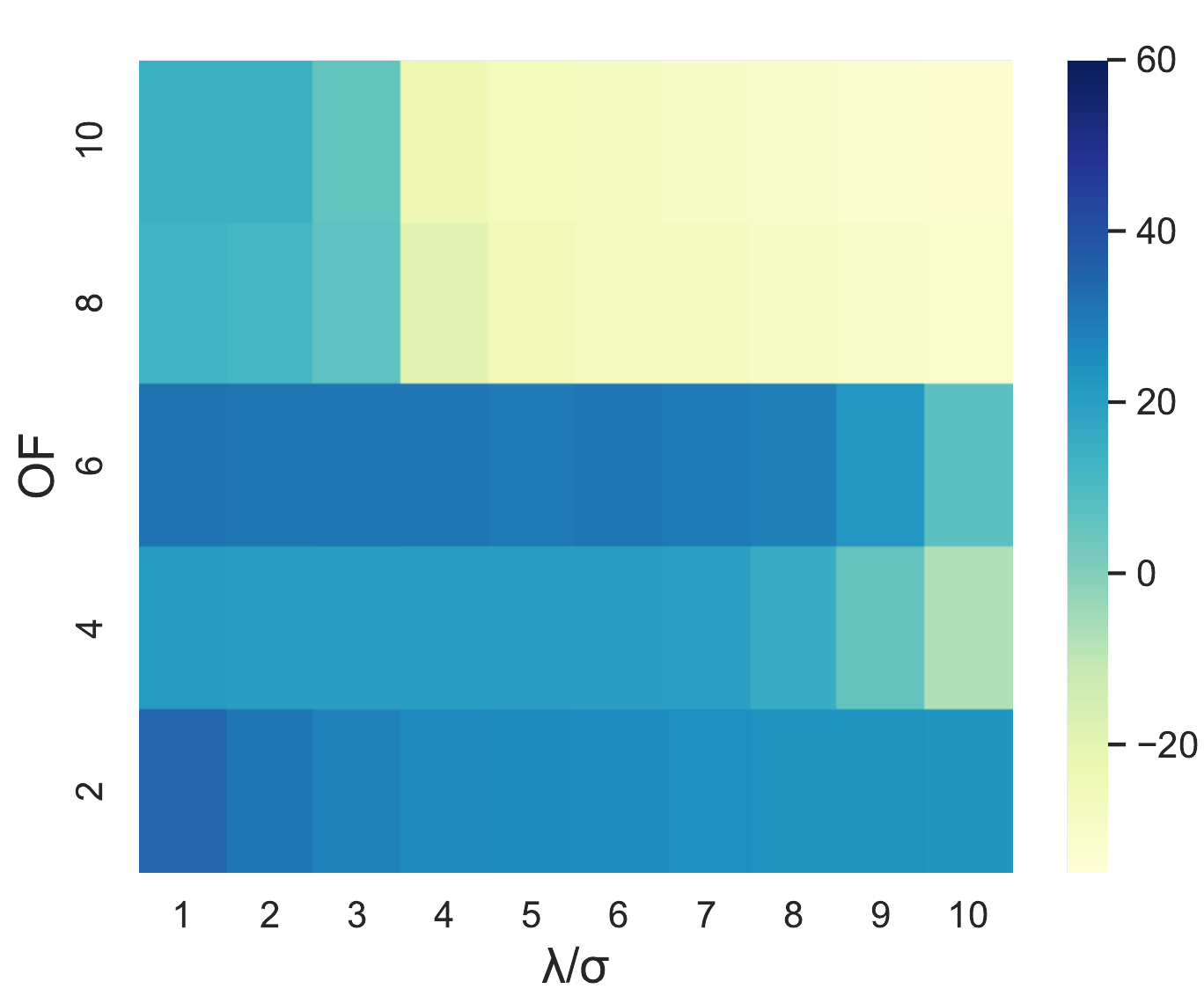}}
        \end{tabular}
        \subfigure[$B^2R^2$]{\includegraphics[width=1.6 in]{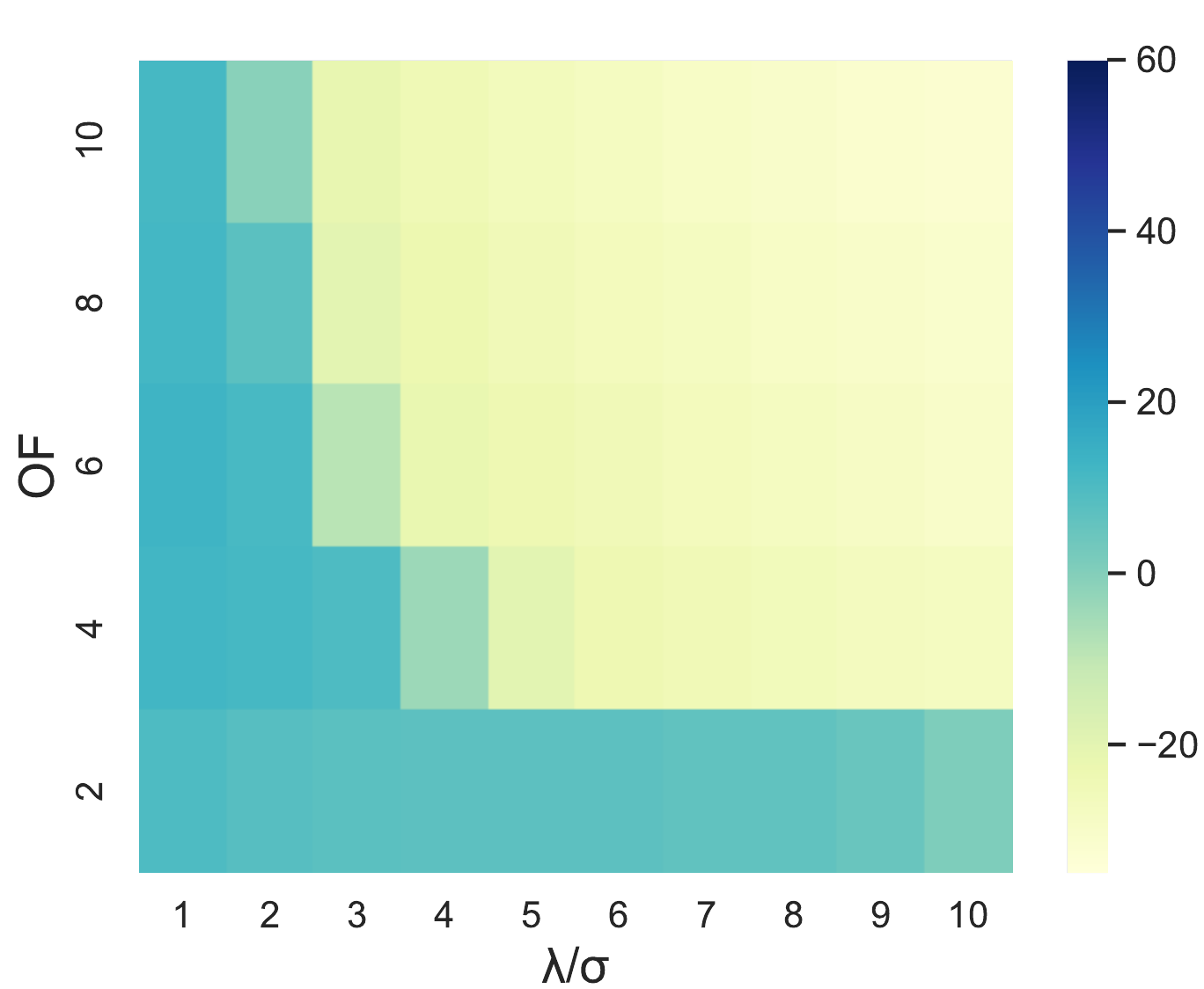}}
        \caption{Comparison of algorithms (with bounded noise) in terms of MSE when recovering a bandlimited signal from modulo samples with  $\lambda$ = 0.2. For a given SNR and OF, $B^2R^2$ has lowest MSE.}
    \label{fig:bounded_lambda_02}
    \end{figure}

     \begin{figure}[!h]
        \centering
        \begin{tabular}{cc}
           \subfigure[HOD ]{\includegraphics[width=1.6 in]{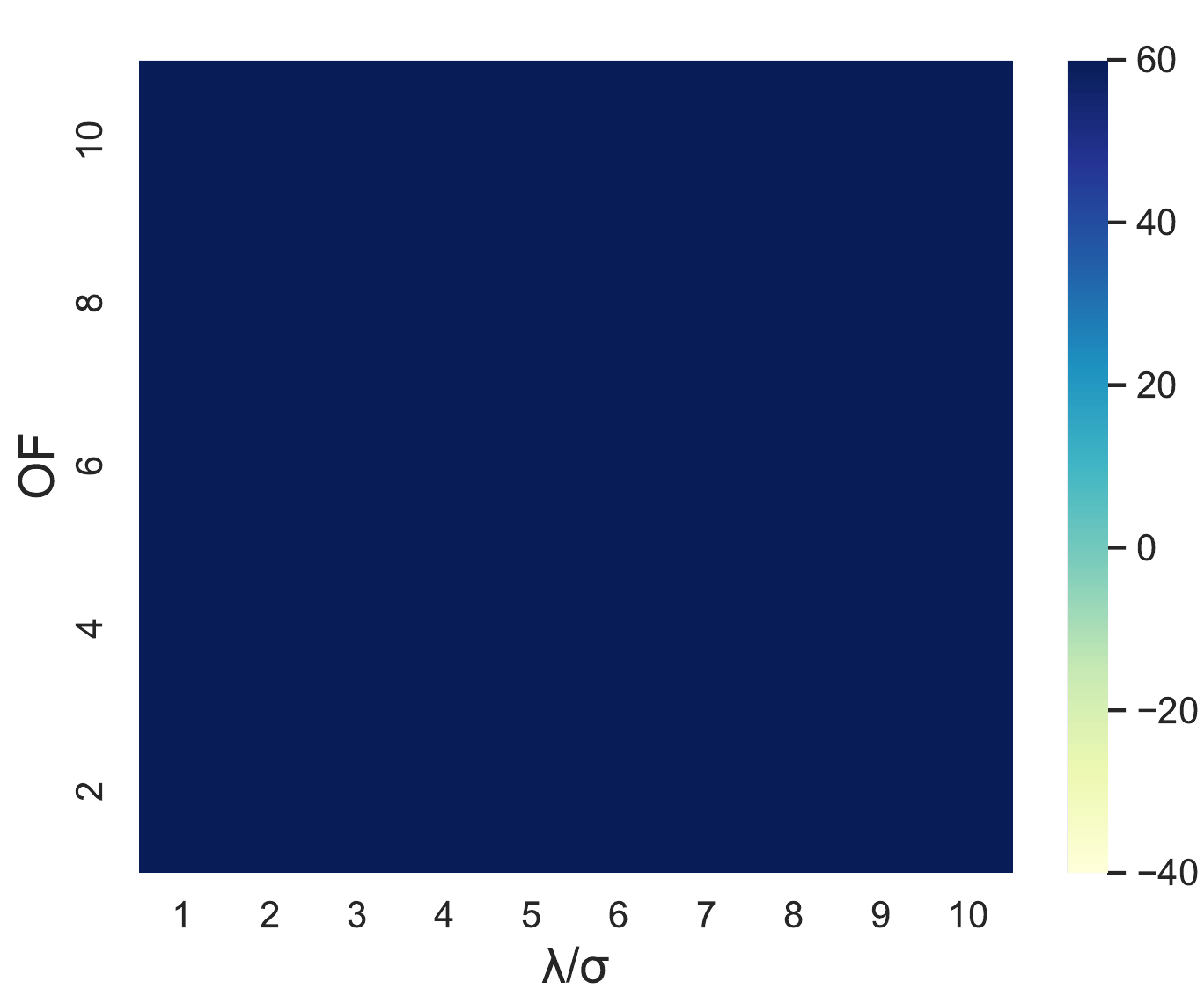}}  &  \subfigure[CPF]{\includegraphics[width=1.6 in]{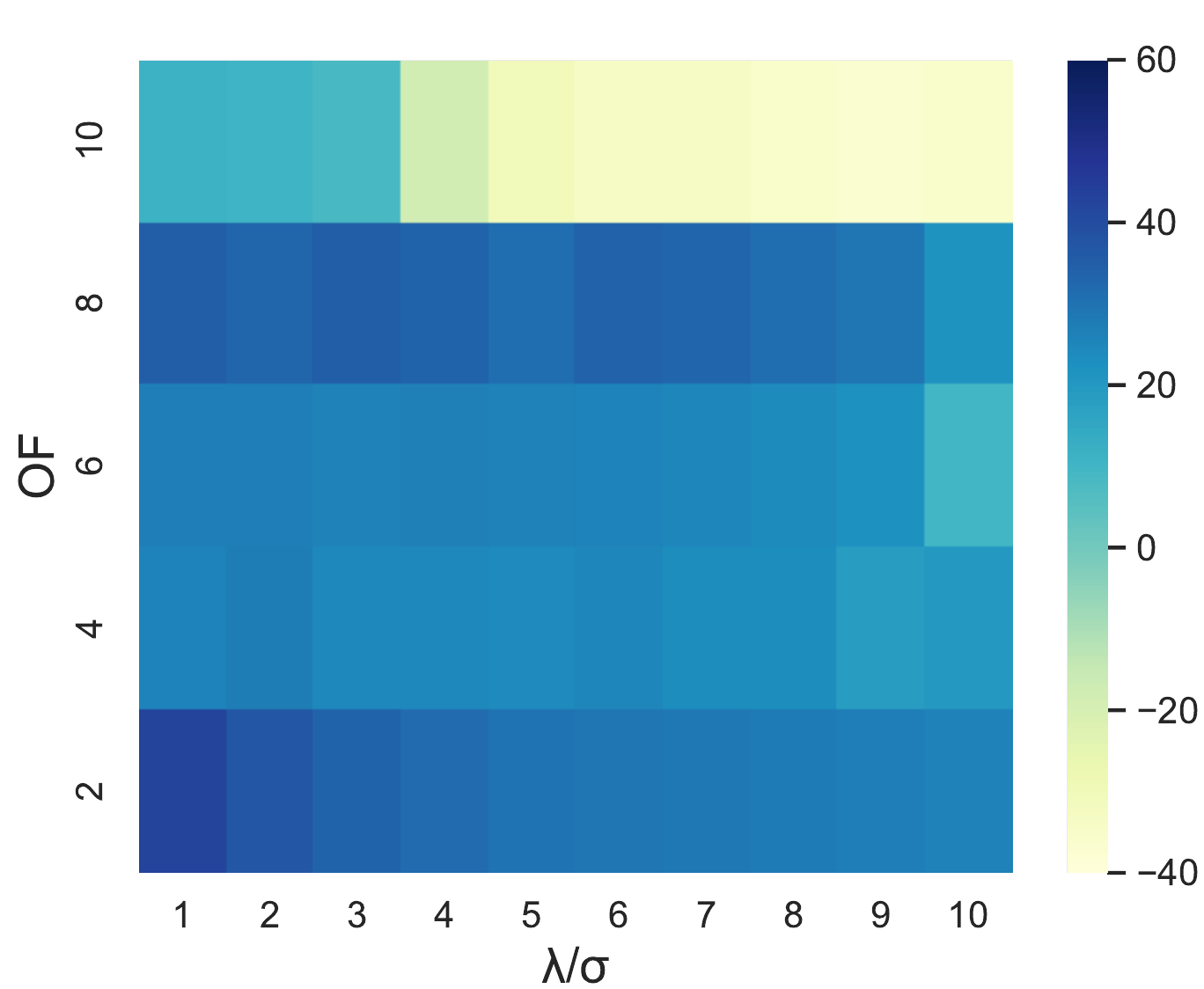}}
        \end{tabular}
        \subfigure[$B^2R^2$]{\includegraphics[width=1.6 in]{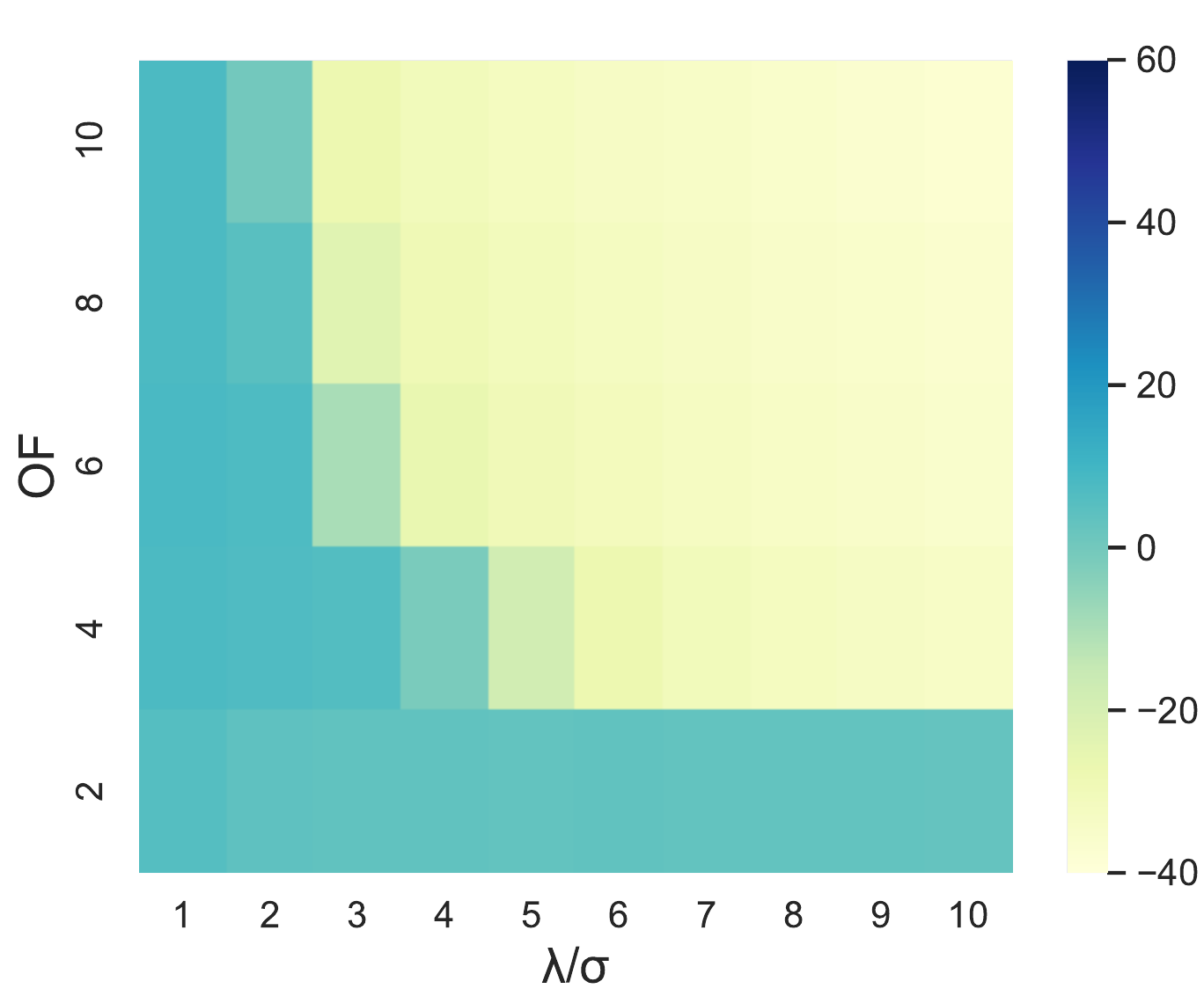}}
        \caption{Comparison of algorithms (with bounded noise) in terms of MSE when recovering a bandlimited signal from modulo samples with  $\lambda$ = 0.1. For a given SNR and OF, $B^2R^2$ has lowest MSE.}
    \label{fig:bounded_lambda_01}
    \end{figure}
    
    
    \begin{figure}[!h]
        \centering
        \begin{tabular}{cc}
           \subfigure[HOD ]{\includegraphics[width=1.6 in]{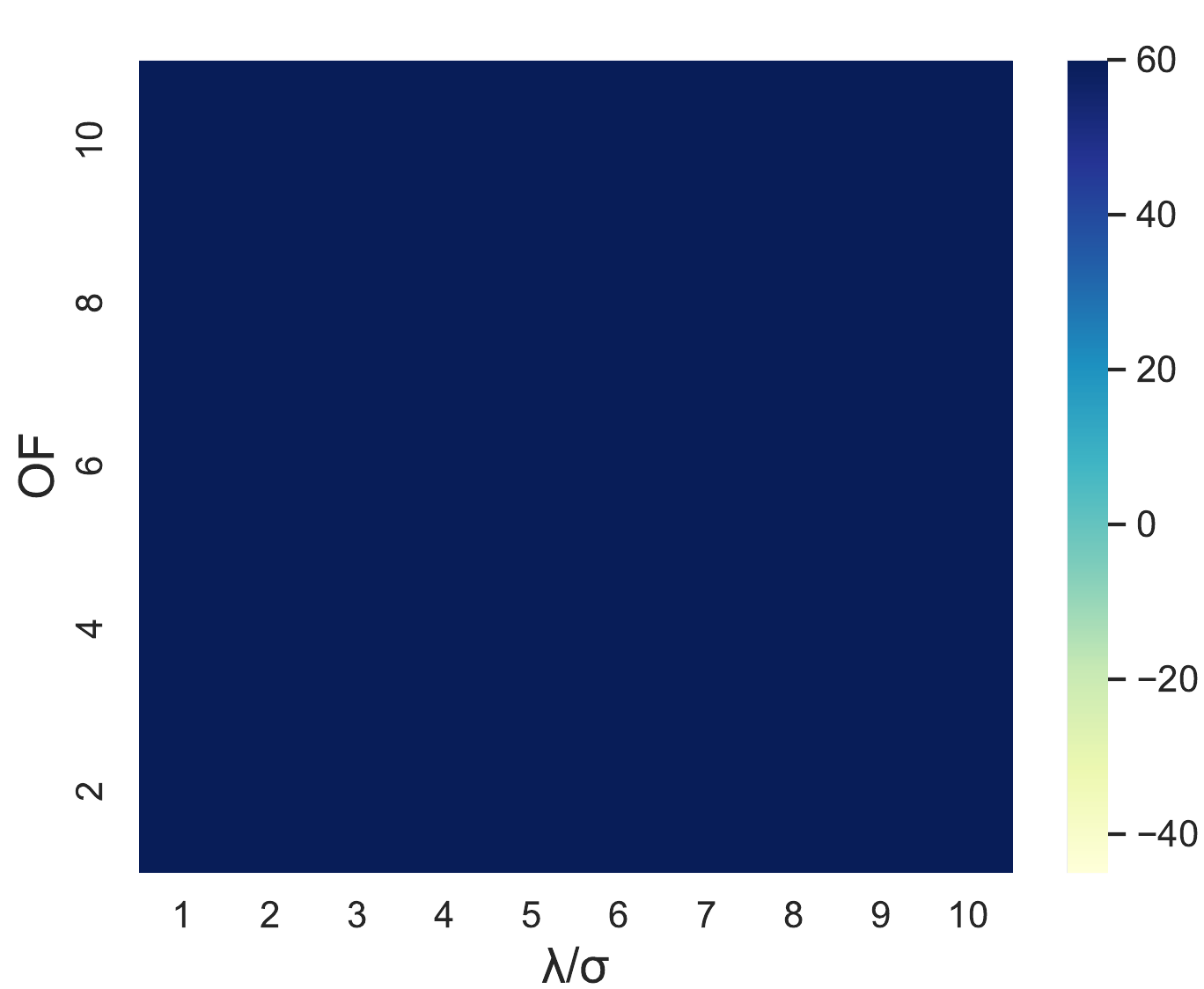}}  &  \subfigure[CPF]{\includegraphics[width=1.6 in]{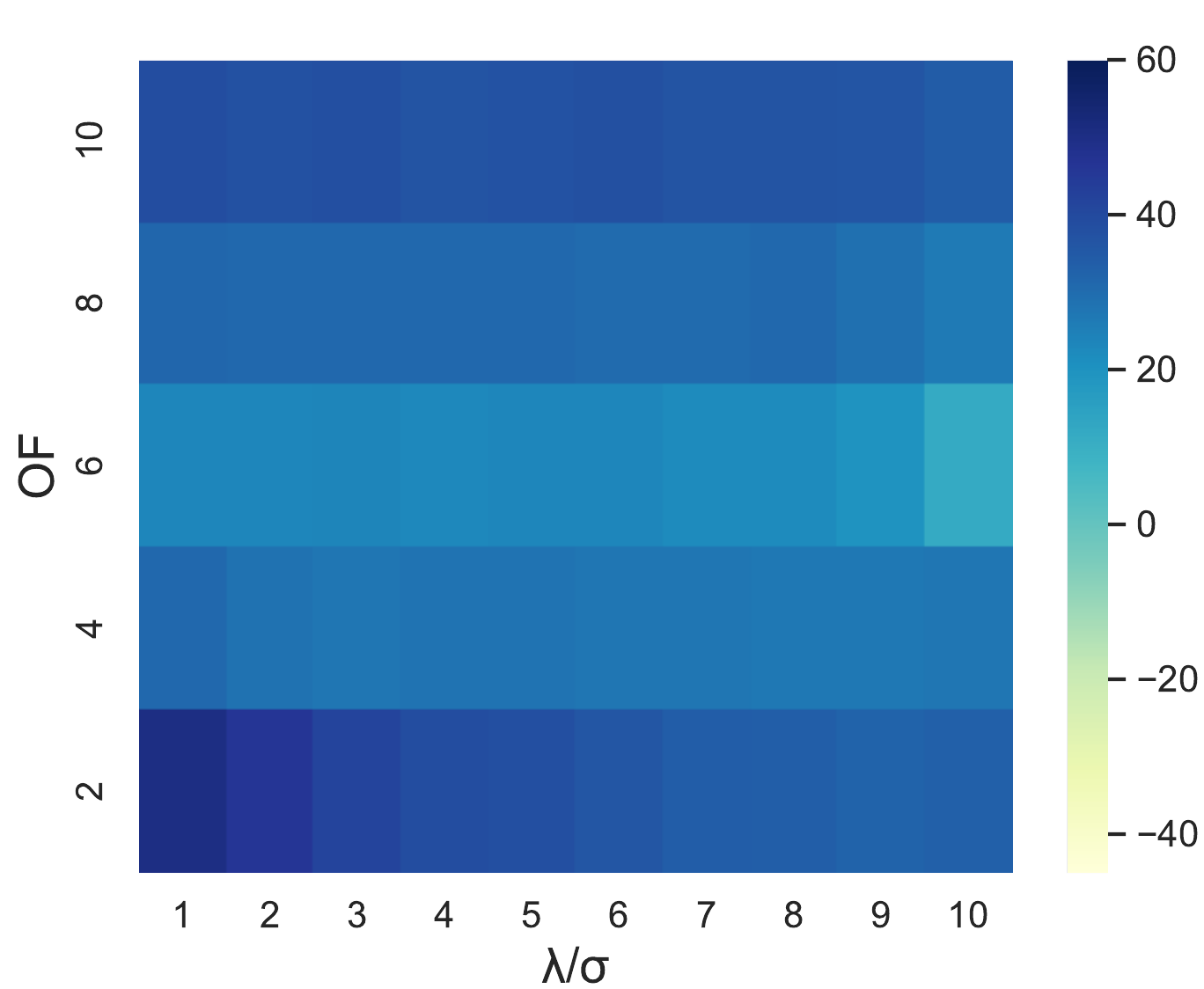}} 
        \end{tabular}
        \subfigure[$B^2R^2$]{\includegraphics[width=1.6 in]{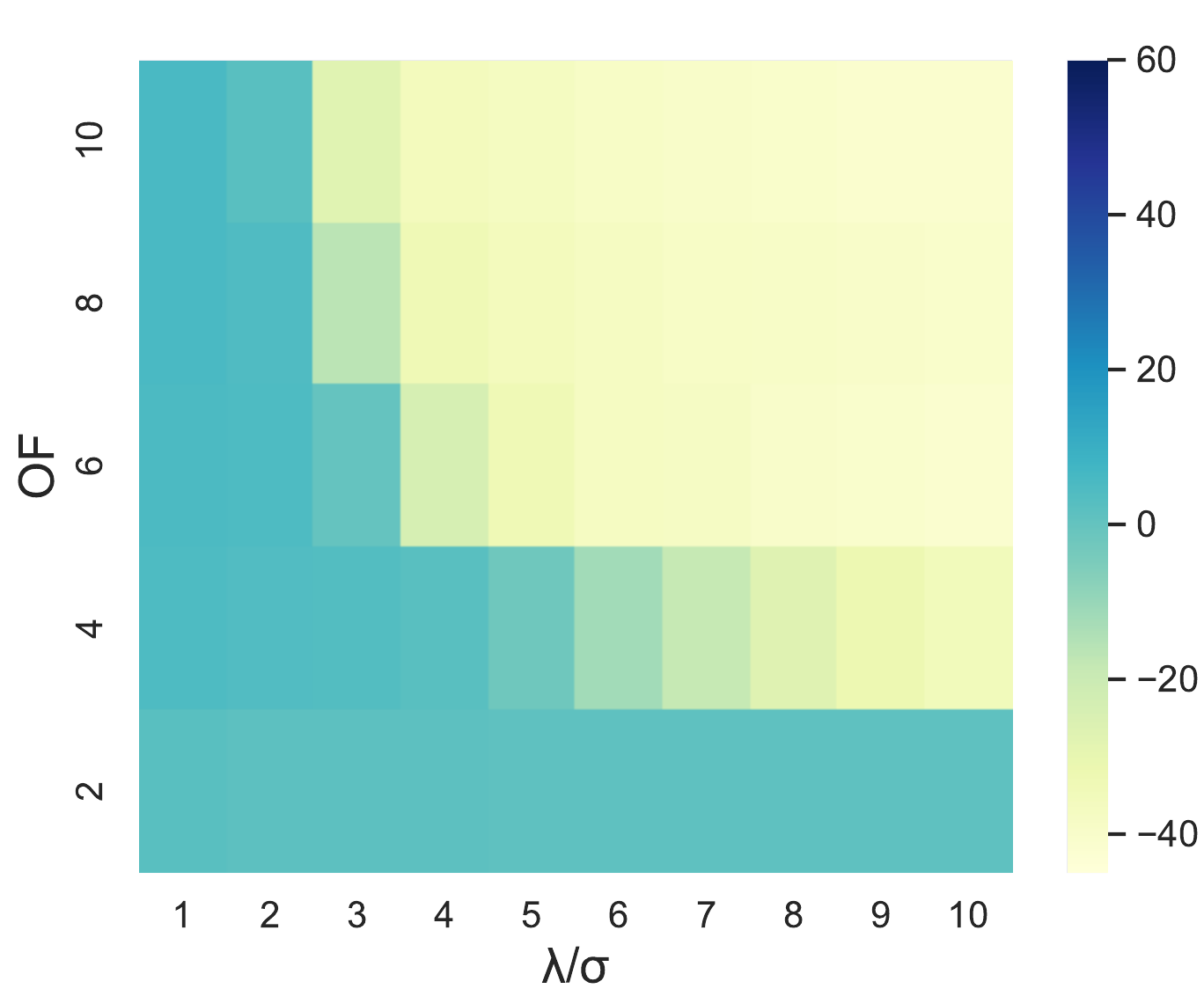}} 
        \caption{Comparison of algorithms (with bounded noise) in terms of MSE when recovering a bandlimited signal from modulo samples with  $\lambda$ = 0.05. For a given SNR and OF, $B^2R^2$ has lowest MSE.}
    \label{fig:bounded_lambda_005}
    \end{figure}
    
    \begin{figure}
    	    \begin{center}
            	{\includegraphics[width=3.2in]{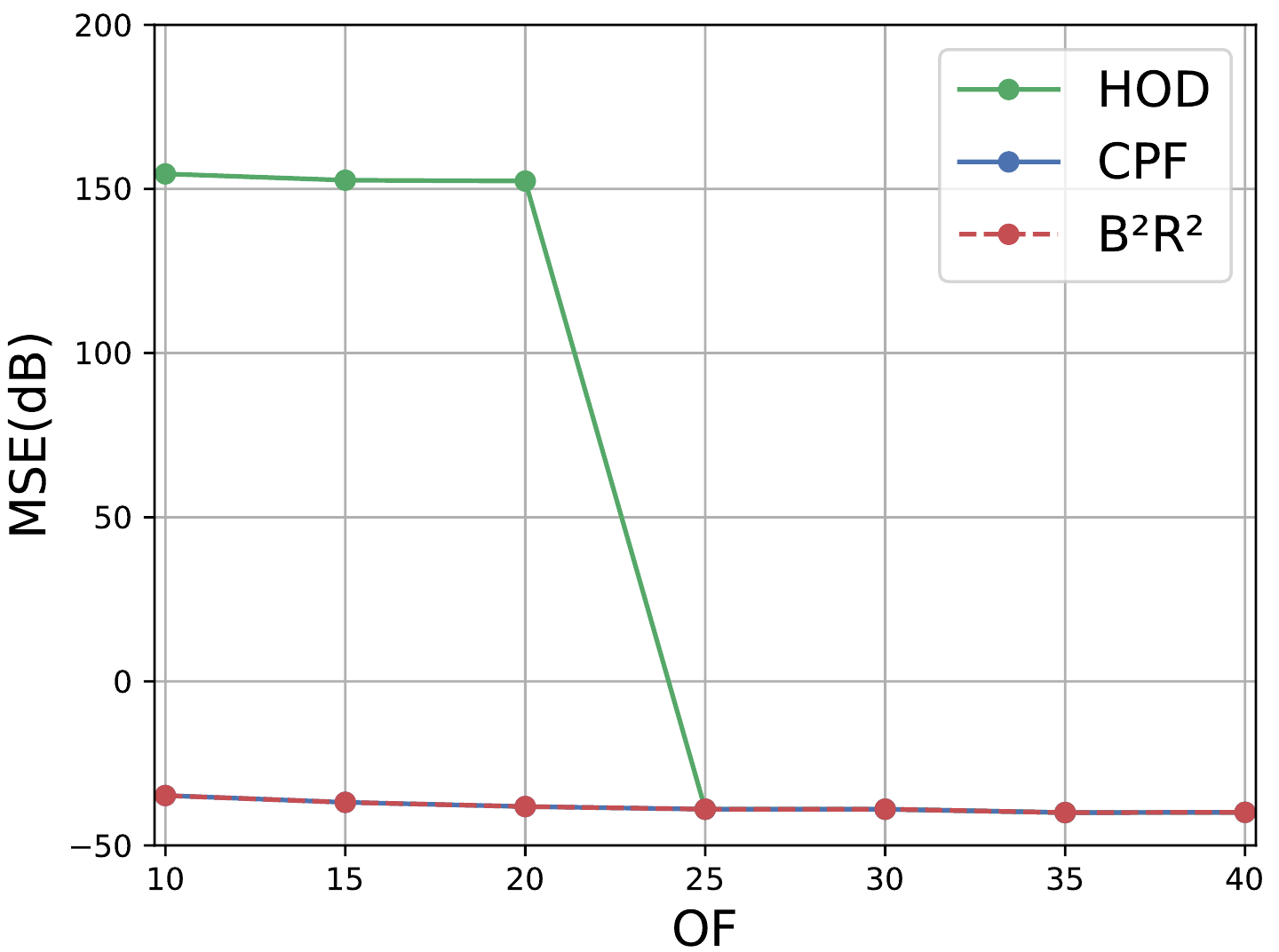}} 
        		\caption{ Comparison of algorithms in terms of MSE in recovering a bandlimited signal from uniform noisy modulo samples with $\lambda=0.1$, and $\lambda / \sigma = 10$; The higher-order difference approach has error of $-40$ dB for
                OF $\geq 25$ whereas the remaining methods achieve $-40$ dB error for $\text{OF} = 10$.}
        		\label{fig:OF_comparison}
        	\end{center}
    \end{figure}

    
    \subsection{Presence of noise}
        Next, we assess the performance of $B^2 R^2$ algorithm as a function of OF, $\lambda$, and noise level. Here we focus on only the modulo operator as it enables us to compare $B^2 R^2$ method with the recently published algorithms \cite{unlimited_sampling17, uls_tsp, uls_romonov}. Specifically, we compare $B^2 R^2$ algorithm with the \emph{higher-order differences} (HOD) approach \cite{unlimited_sampling17, uls_tsp} and \emph{Chebyshev polynomial filter}-based (CPF) method \cite{uls_romonov}. We examine reconstruction problem from the following noisy measurements  
        \begin{align}
            \tilde{f}_{\lambda}(n T_s) = {f}_{\lambda}(n T_s) + v(n T_s) = \mathcal{M}_{\lambda}f(n T_s) + v(n T_s),
        \label{eq: noisy_samples}
        \end{align}
        where $v(n T_s)$ denotes noise. In the experiments we normalize the bandlimited signals to have maximum amplitude of one. In the simulations SNR is computed as $\text{SNR} = 20 \log\left(\frac{||f_{\lambda}(n T_s)||}{||v(n T_s)||}\right)$. The reconstruction accuracy of different algorithms is compared in terms of normalized mean-squared error (MSE) as $ \frac{\sum |f(nT_s) - \hat{f}(n Ts)|^2}{\sum |f(nT_s)|^2}$, where $\hat{f}(n T_s)$ denotes the estimate of $f(n T_s)$. For each noise level, 1000 independent noise realizations were generated and average MSE is computed for them. In all experiments we consider a synthetic bandlimited signal of length 1024. The structure of the generated signals is sum of sinc function with random coefficients. We examine both bounded and unbounded noises.

    \begin{figure}[]
    	    \begin{center}	{\includegraphics[width=3.2in]{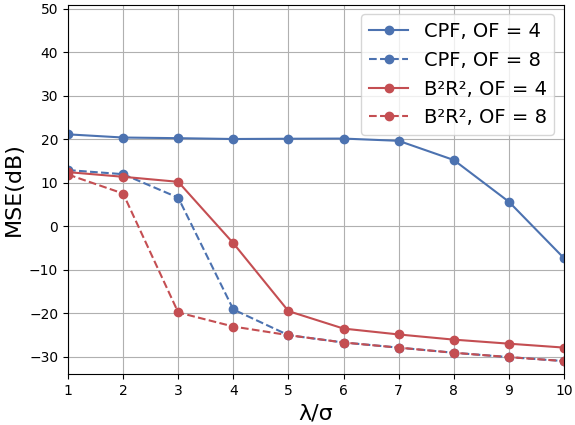}}
        		\caption{Comparison of CPF and $B^2R^2$ algorithms (with bounded noise) in terms of MSE when recovering a bandlimited signal from modulo samples with $\lambda = 0.2$ and $\text{OF} = 4,8$. For a given ratio $\lambda /\sigma$, $B^2R^2$ has the lowest MSE.}
        		\label{fig:bound_of4}
        	\end{center}
    \end{figure}

    \begin{figure}[!h]
            \centering
    \begin{tabular}{c c}
         \subfigure[HOD ]{\includegraphics[width=1.6 in]{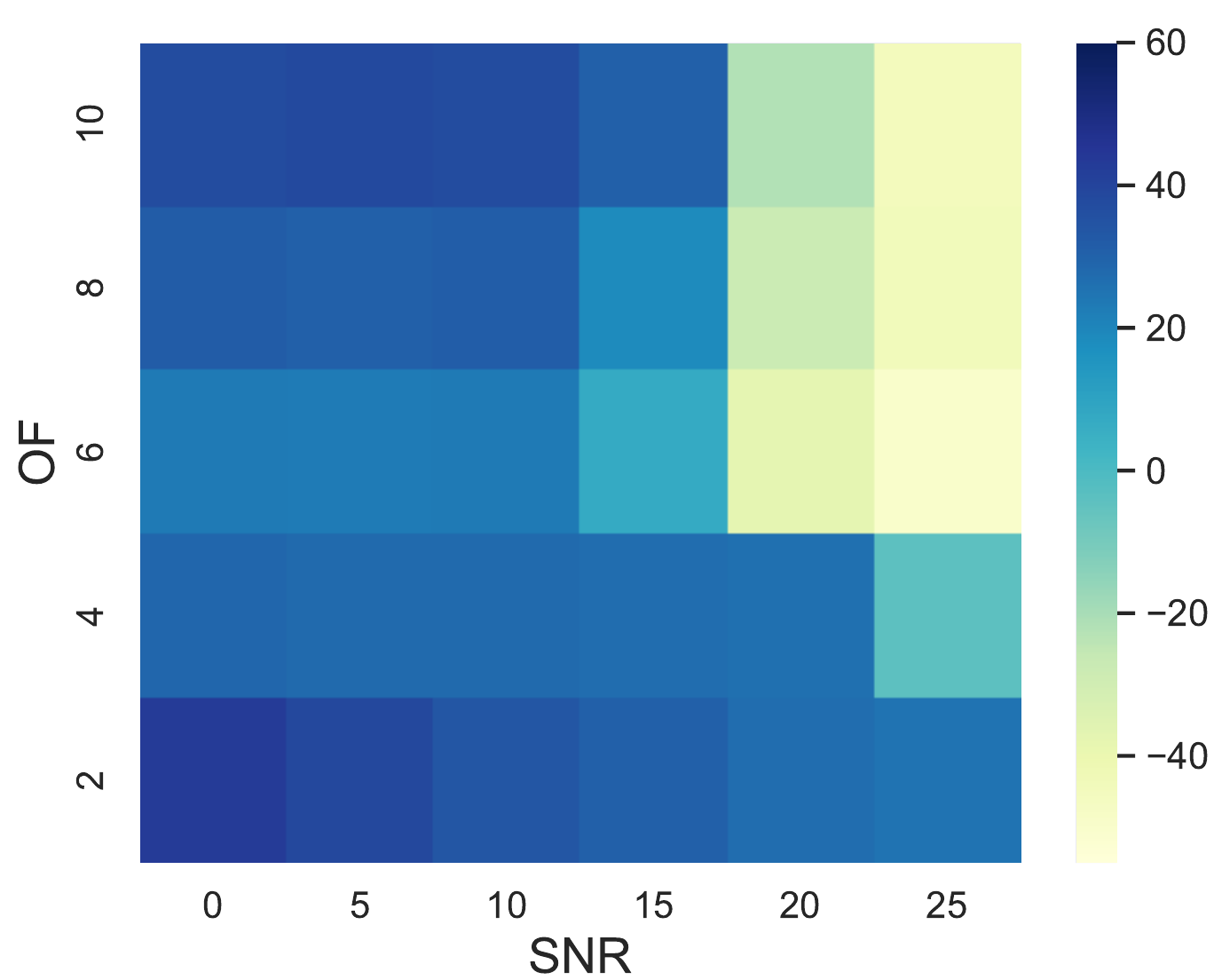}} & \subfigure[CPF]{\includegraphics[width=1.6 in]{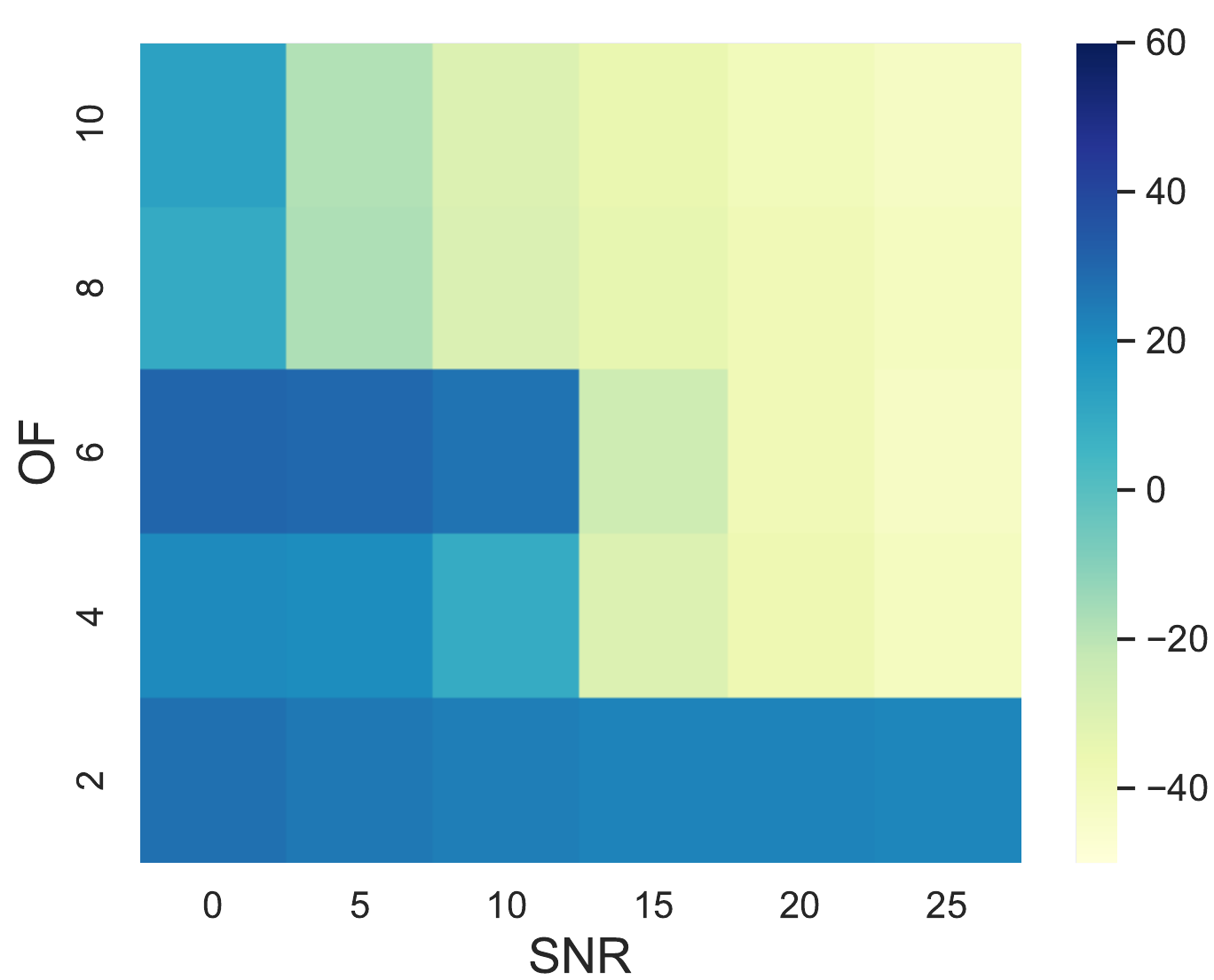}} 
    \end{tabular}
        \subfigure[$B^2R^2$]{\includegraphics[width=1.6 in]{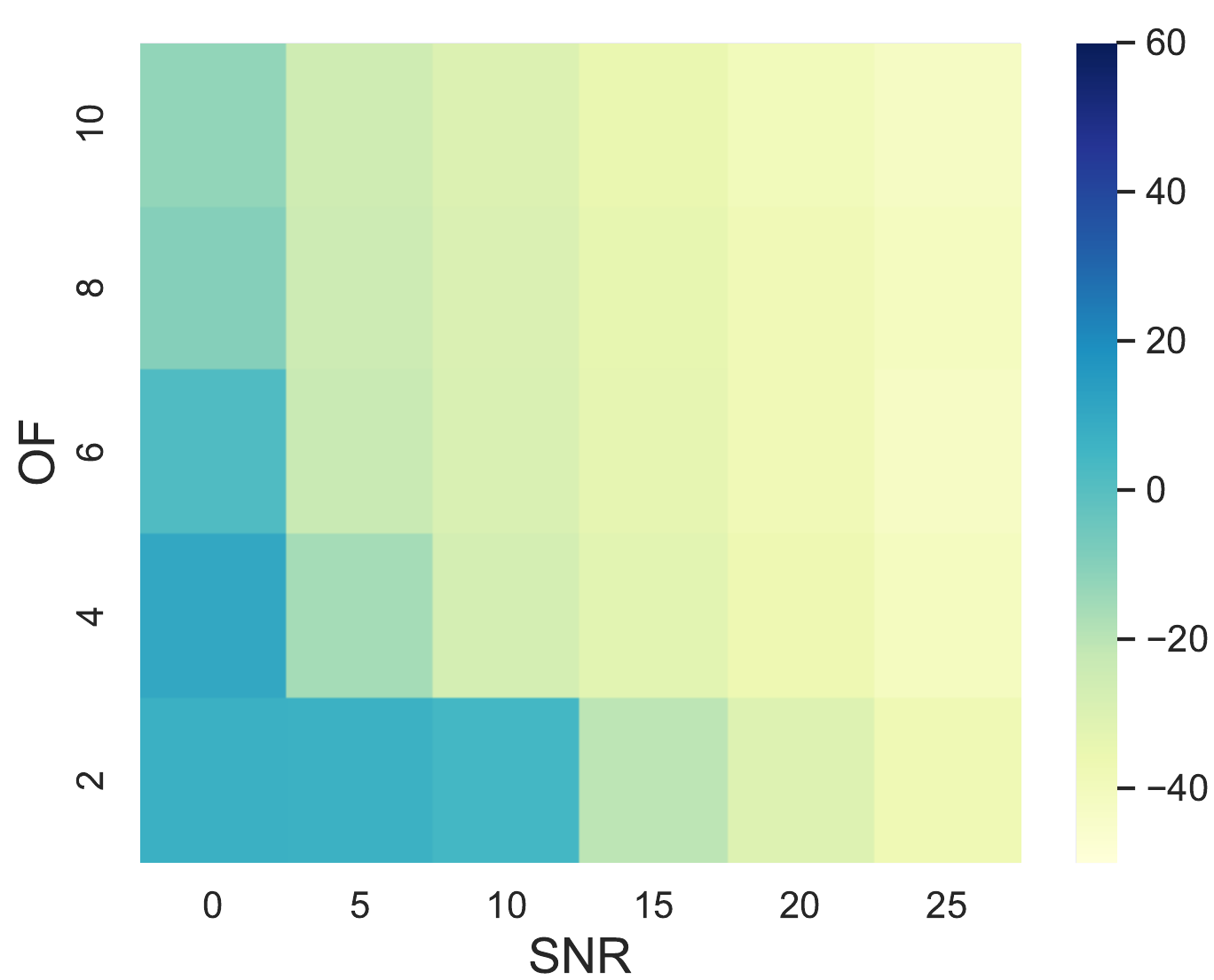}}
        \caption{Comparison of algorithms (with unbounded noise) in terms of MSE when recovering a bandlimited signal from modulo samples with  $\lambda = 0.2$. For a given SNR and OF, $B^2R^2$ has lowest MSE.}
    \label{fig:lambda_02}
    \end{figure}

    \begin{figure}[!h]
        \centering
        \begin{tabular}{c c}
           \subfigure[HOD ]{\includegraphics[width=1.6 in]{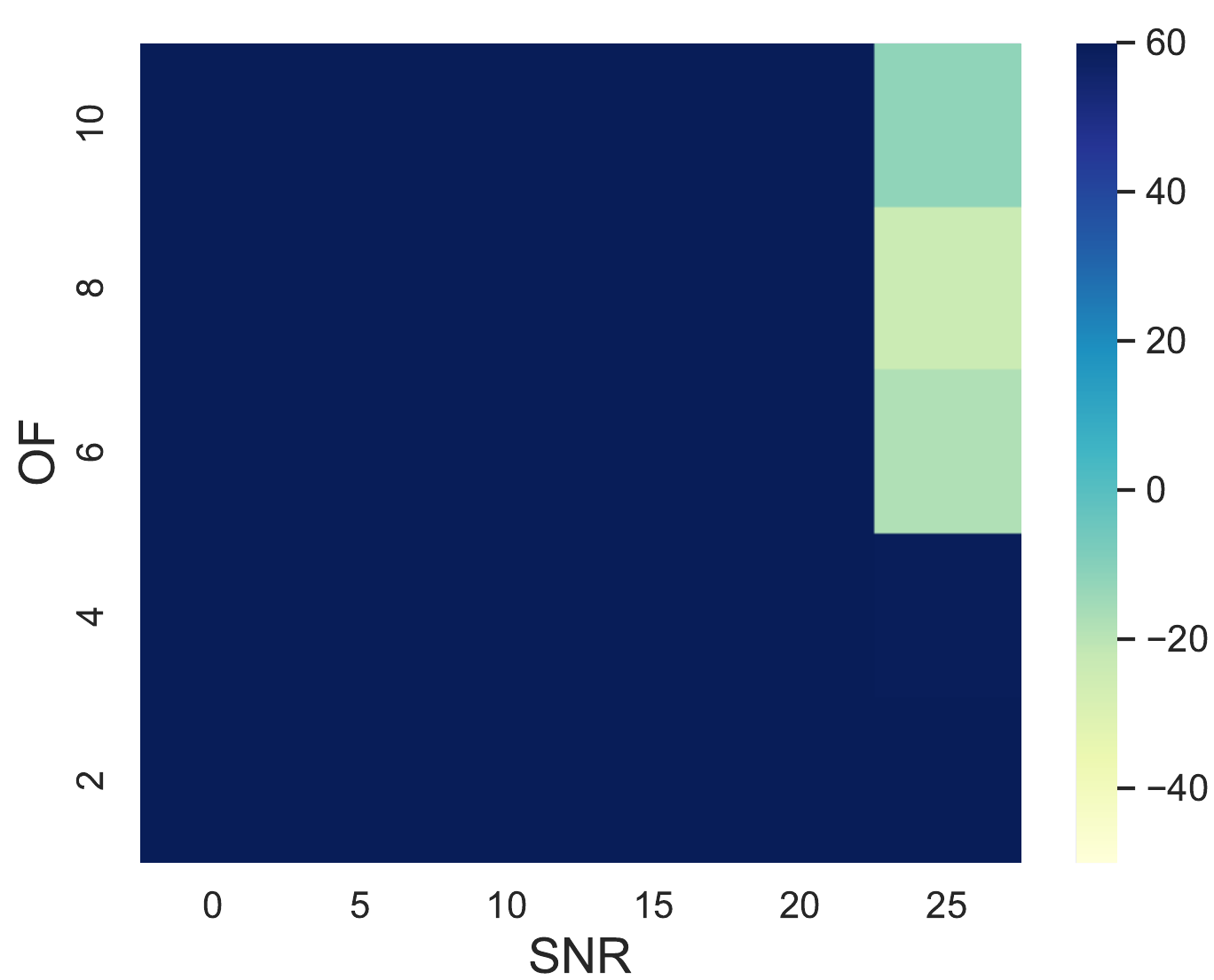}}  & \subfigure[CPF]{\includegraphics[width=1.6 in]{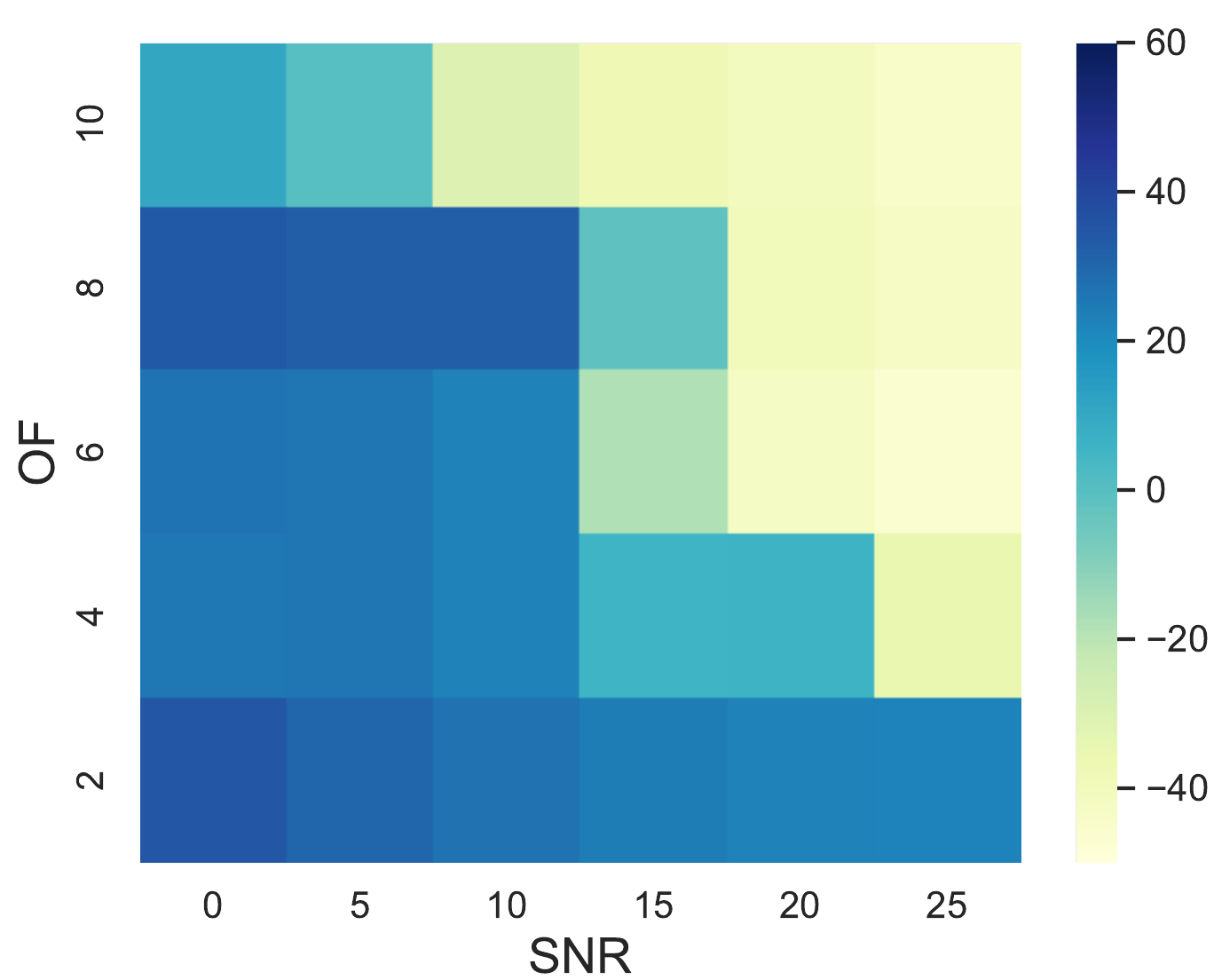}} 
        \end{tabular}
        \subfigure[$B^2R^2$]{\includegraphics[width=1.6 in]{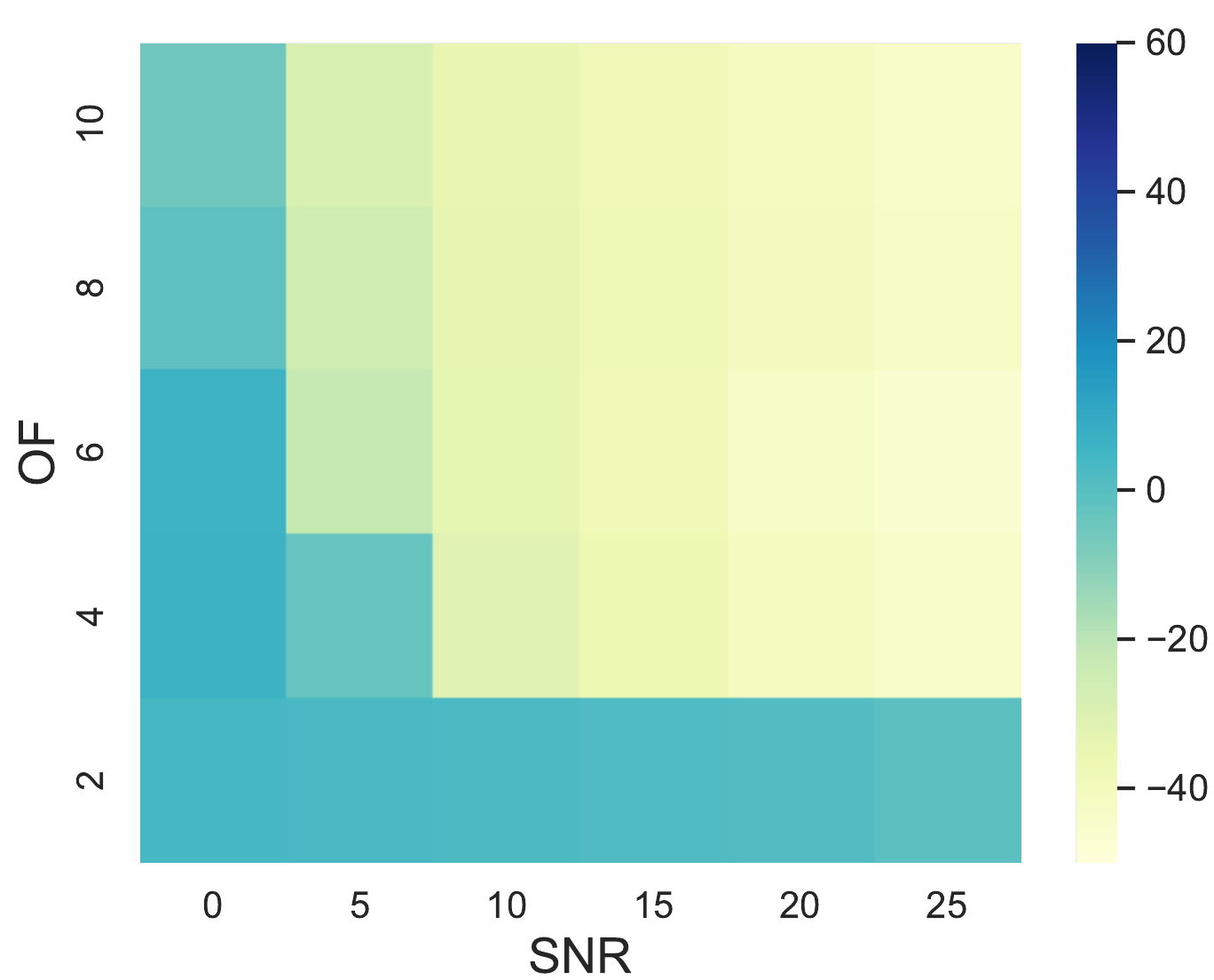}}
        \caption{Comparison of algorithms (with unbounded noise) in terms of MSE when recovering a bandlimited signal from modulo samples with  $\lambda = 0.1$. For a given SNR and OF, $B^2R^2$ has lowest MSE.}
    \label{fig:lambda_01}
    \end{figure}

    \begin{figure}[!h]
        \centering
        \begin{tabular}{c c}
           \subfigure[HOD ]{\includegraphics[width=1.6 in]{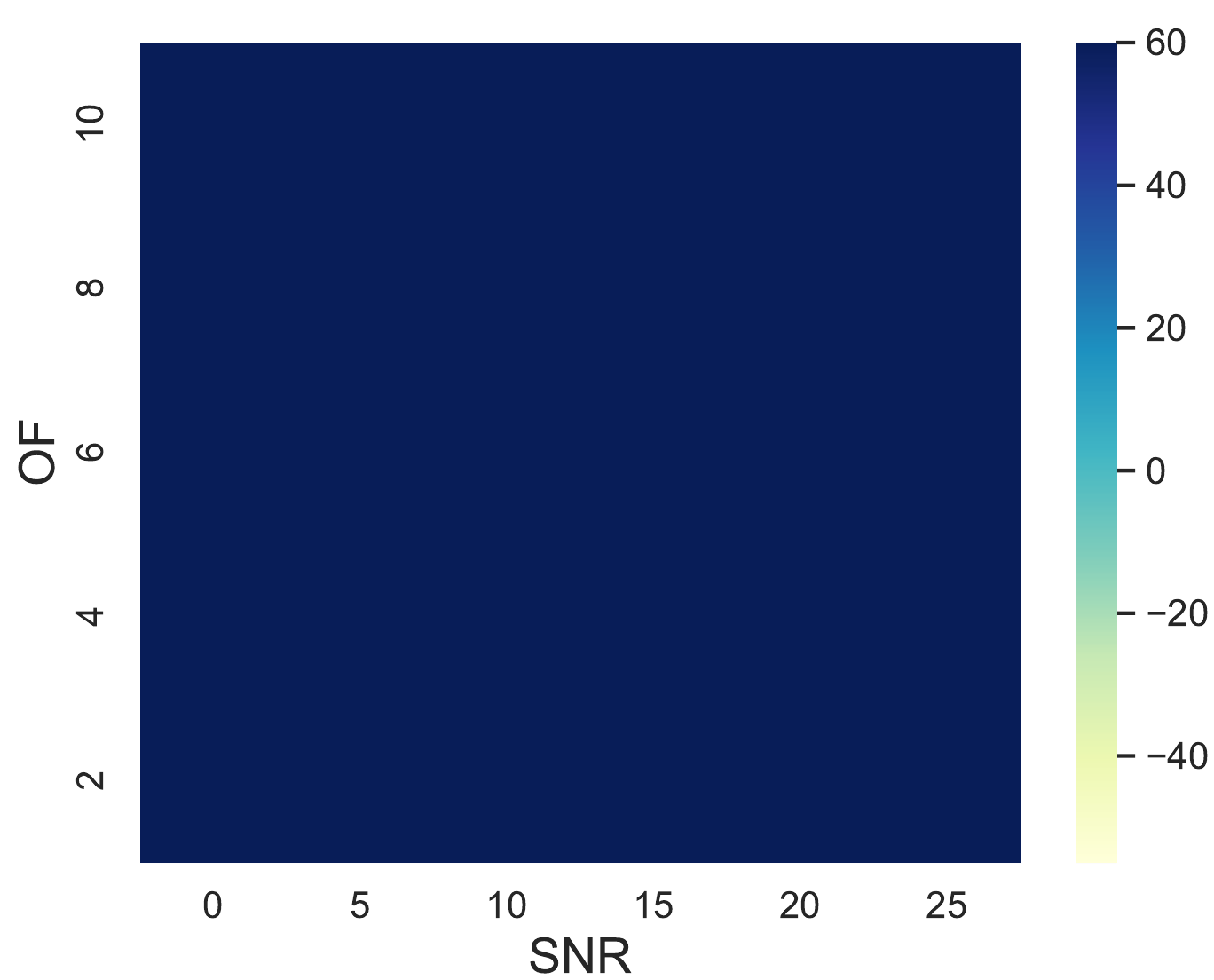}}  &  \subfigure[CPF]{\includegraphics[width=1.6 in]{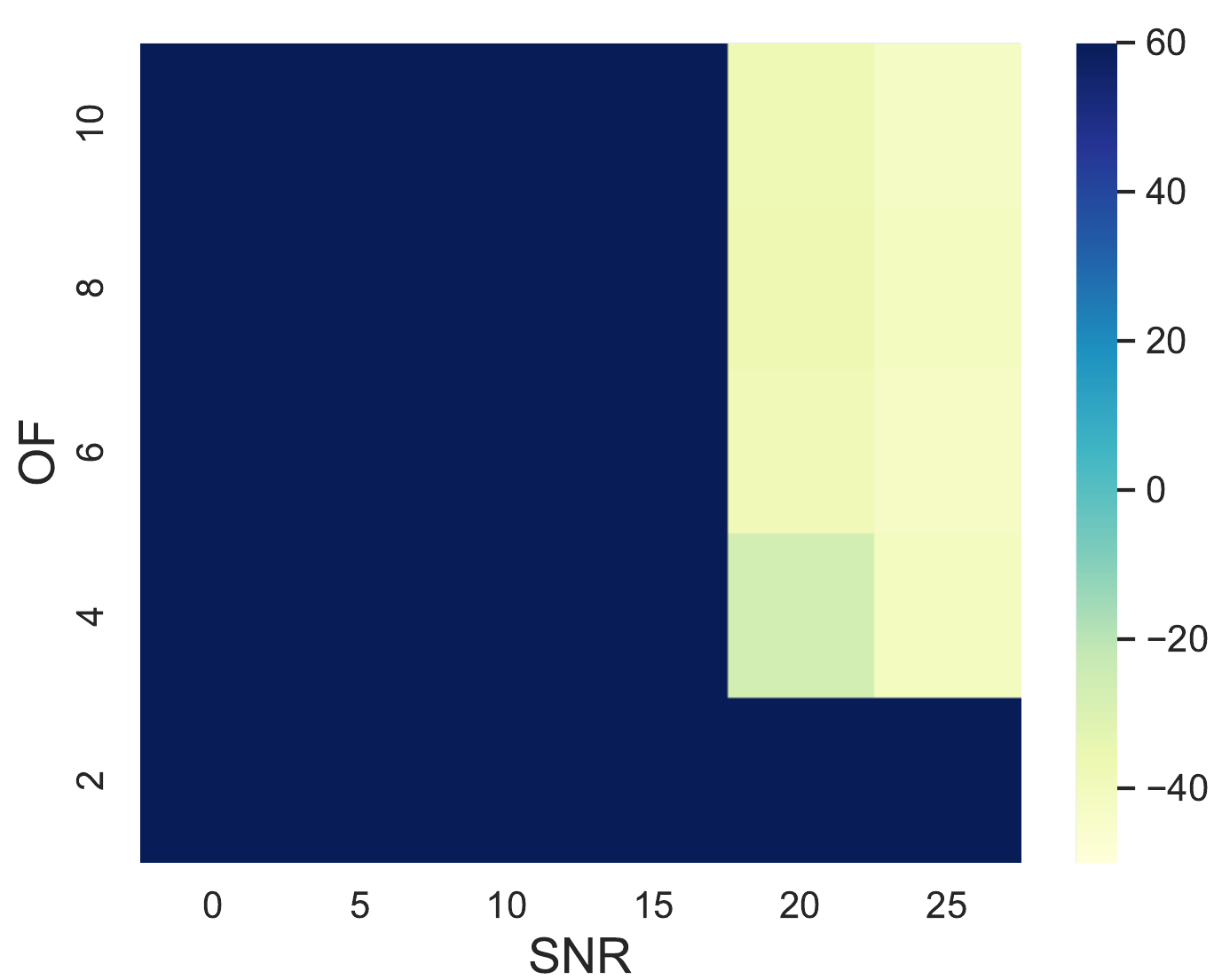}} 
        \end{tabular}
        \subfigure[$B^2R^2$]{\includegraphics[width=1.6 in]{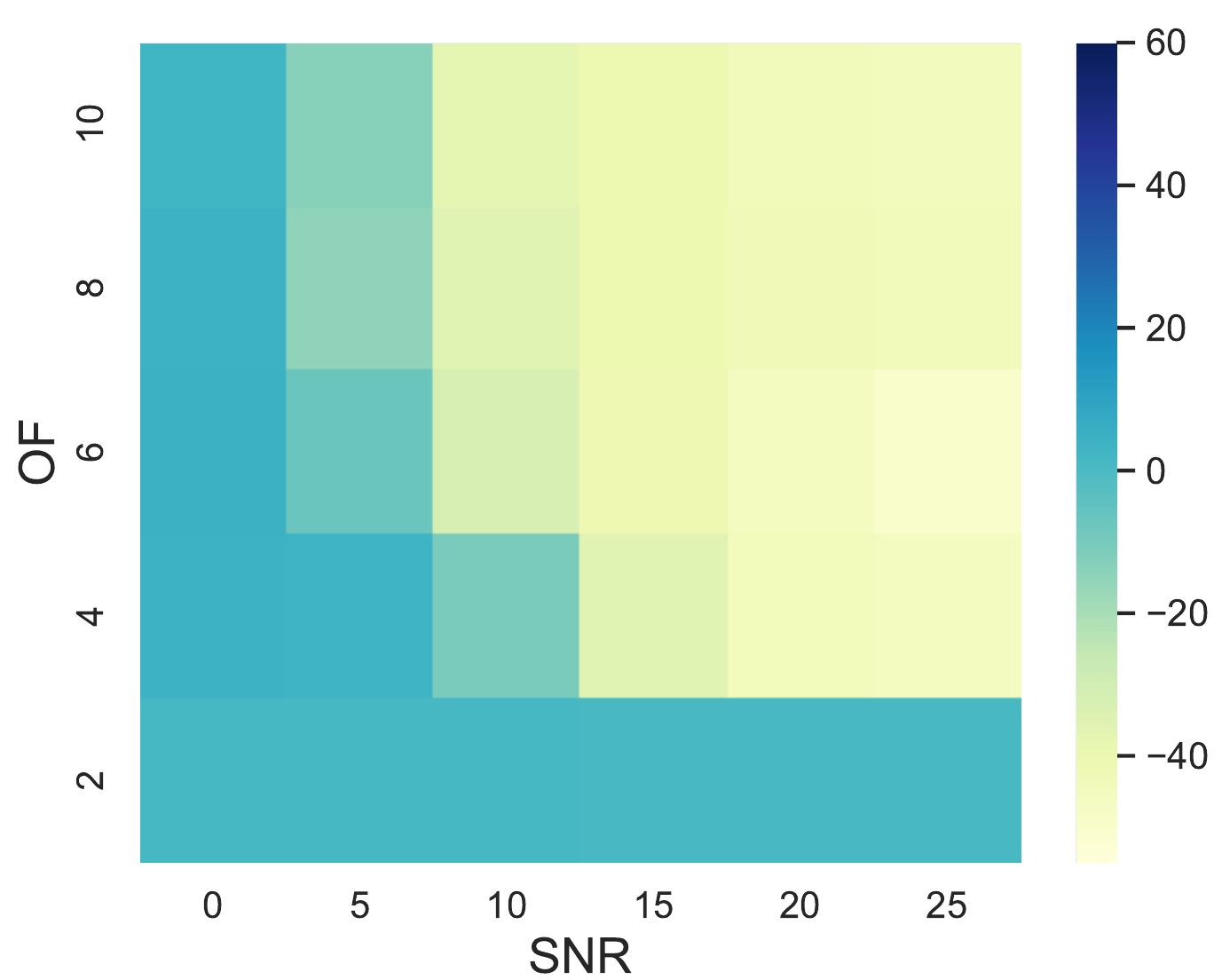}} 
        \caption{Comparison of algorithms (with unbounded noise) in terms of MSE when recovering a bandlimited signal from modulo samples with  $\lambda = 0.05$. For a given SNR and OF, $B^2R^2$ has lowest MSE.}
    \label{fig:lambda_005}
    \end{figure}

    \begin{figure}[!t]
    	    \begin{center}	{\includegraphics[width=3.2in]{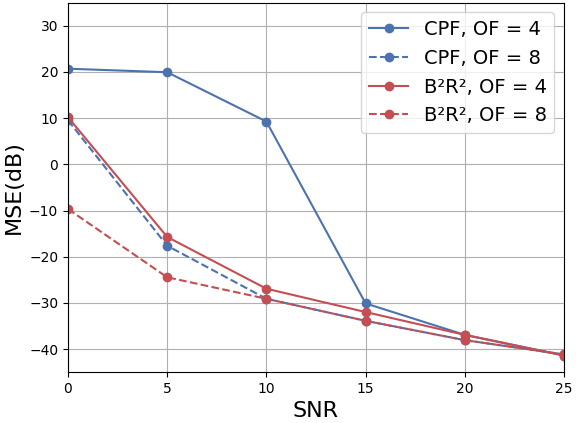}} 
        		\caption{Comparison of CPF and $B^2R^2$ algorithms (with unbounded noise) in terms of MSE while recovering signal from modulo samples with $\lambda = 0.2$ and $\text{OF} = 4,8$. For a given SNR, $B^2R^2$ has lowest MSE.}
        		\label{fig:Unbound_of4}
        	\end{center}
    \end{figure}
 \subsubsection{Bounded noise}
    For bounded noise we assume that the noise is uniformly distributed with zero mean and $|v(nT_s)| \leq \sigma$. We compare the algorithms for different SNRs and OFs with fixed $\lambda$. Fig. ~\ref{fig:bounded_lambda_02},  \ref{fig:bounded_lambda_01}, and \ref{fig:bounded_lambda_005} show MSEs of the algorithms for $\lambda = 0.2, 0.1,$ and $0.05$, respectively.
    
     We observe that the HOD method is unable to reconstruct the signals (with MSE on the order of 60 dB) for noise levels and $\lambda$s considered in the simulations. This is because a sufficient condition for the HOD algorithm to recover the signal in the absence of noise is that $\text{OF}\geq 17$. In the presence of noise, a larger amount of oversampling is required, and hence, in this simulation setting, where $\text{OF}\leq 10$, the method fails. Both $B^2 R^2$ and CPF algorithms reconstruct the signal with lower MSEs. To ascertain the claim, we perform simulations for $\text{OF}\geq 10$ with $\lambda = 0.1$ and $\lambda / \sigma = 10$. The MSEs for the three algorithms are shown in Fig.~\ref{fig:OF_comparison}. We observe that for $\text{OF}\geq 25$, the HOD method can reconstruct the signal in this particular setting, whereas, as expected, both $B^2 R^2$ and CPF methods reconstruct the signal for lower OFs. 
    
     Comparing the $B^2 R^2$ and CPF methods, we observe that $B^2 R^2$ results in lower MSE. For a better visualization, comparison of MSEs of these two algorithms for $\text{OF} = 4$ and 8 in Fig. \ref{fig:bound_of4}. For $\text{OF} = 4$, $B^2R^2$ algorithm has $10-40$ dB lower MSE compared to CPF approach for different noise levels.

     \subsubsection{Unbounded noise}
        In these experiments, we assume that the noise samples $v(nT_s)$ are independent and identically distributed Gaussian random variables with zero mean. The variance is set to achieve a desired SNR. We compare the algorithms for different SNRs and OFs with fixed $\lambda$. Fig.~\ref{fig:lambda_02},  \ref{fig:lambda_01} and \ref{fig:lambda_005} show MSE of the algorithms for $\lambda = 0.2,\,0.1,\,0.05$, respectively. As in the case of bounded noise, the HOD method is unable to recover the signal for this experimental setup. Comparing the rest of the methods, the $B^2R^2$ algorithm results in lower MSE than that of the CPF method for a given $\lambda$, MSE, and OF.
        
        For a better visualization, we compared the $B^2 R^2$ and CPF methods in terms of MSE for $\text{OF} = 4$ and $\text{OF} = 8$ in Fig. \ref{fig:Unbound_of4}. 
        For $\text{OF} = 4$, we note that for low SNR values, the $B^2R^2$ algorithm results in $10-30$ dB lower MSE when compared to the CPF approach.

\section{conclusion}
We propose a nonlinear operator that can be used to address the dynamic range issue of ADCs. The proposed is as a generalization of existing operators such as companding and modulo. We show that bandlimited signals can be perfectly reconstructed from the samples of the proposed nonlinear operator provided that the sampling rate is greater than the Nyquist rate. We also propose a robust algorithm to recover the true samples from the nonlinear samples. Our results show that our algorithm operates at lower sampling rate compared to the existing approaches for different noise levels and dynamic ranges.

\appendix

   In this appendix, we provide proof of Theorem \ref{theorem:necessary_identifiability}. We first present the proof for the necessary part and then discuss sufficiency.\\
   
   \label{proof:necessary_identifiability}
   \noindent \emph{Necessary Part:} Let $\omega_s = \frac{2\pi}{T_s}$ denote the sampling rate in rad/sec. 
    To prove the necessary condition on the sampling rate, we consider two cases: (i) Sampling below the Nyquist rate: $\omega_s = \frac{2\pi}{T_s} < 2\omega_m$ and (ii) Sampling at the Nyquist rate: $\omega_s = 2\omega_m$. When the bandlimited signal $f(t)$ is sampled below the Nyquist rate then there exists another signal $\hat{f}(t) \in L^2(\mathbb{R})\cap B_{\omega_{m}}$ such that $f(nT_s) = \hat{f}(nT_s), $ for all $ n \in \mathbb{Z}$ due to aliasing. This implies that the samples of the framework shown in Fig.~\ref{fig:sampling_block} are the same for inputs $f(t)$ and $\hat{f}(t)$, that is,  $f_{\lambda}(nT_s) = \hat{f}_{\lambda}(nT_s), $ for all $ n \in \mathbb{Z}$. Hence the signal $f(t)$ is not uniquely identifiable from $f_{\lambda}(nT_s)$ for $\omega_s < 2\omega_m$.

    Next, consider sampling at the Nyquist rate $\omega_s = 2\omega_m$. We show that there exists a signal $f(t)$ which is not uniquely identifiable. In other words, given $f(t)$ there exists another bandlimited signal $\hat{f}(t) \in L^2(\mathbb{R})\cap B_{\omega_{m}}$ such that $f_{\lambda}(nT_s) = \hat{f}_{\lambda}(nT_s), $ for all $ n \in \mathbb{Z}$. Consider an $f(t) \in L^2(\mathbb{R})\cap B_{\omega_{m}}$ such that $|f(n_0 T_s)|>\lambda$ for some $n_0 \in \mathbb{Z}$ . For example, for $f(t) = 2 \lambda \,\text{sinc} \left(\frac{t-n_0 T_s}{T_s}\right)$ satisfies the mentioned sampling conditions. 
    We construct $\hat{f}(t)$, from the samples $f(nT_s)$, by defining its Nyquist rate samples as
    \begin{align}
    \label{eq:construct_sequence}
        \hat{f}(nT_s) = \begin{cases}
            g^{-1} \circ \mathcal{G}_{\lambda}f(n T_s), & n = n_0,\\
             f(n T_s), & \text{otherwise},
        \end{cases}
    \end{align}
    where $g^{-1}(\cdot)$ is the inverse of $g(\cdot)$ (cf. \eqref{eq:non-linear-operator}). 
    The range and domain of the functions $g(\cdot), \, g^{-1}(\cdot)$ are given by the interval $[-\lambda, \lambda]$.
    Since $|\mathcal{G}_\lambda f(n_0 T_s)|\leq \lambda$ (cf. \eqref{eq:bounded_g}), then from the aforementioned range of $g^{-1}(\cdot)$, we infer that 
    \begin{align}
        |\hat{f}(n_0 T_s)| \leq \lambda.
        \label{eq:bounded_fhat}
    \end{align}
    Hence $f(n_0 T_s) \neq \hat f(n_0 T_s)$ and thus $f(t) \neq \hat{f}(t)$. 
    
    The signal $\hat{f}(t)$ is in $B_{\omega_{m}}$ by construction. 
    Next, we show that it is also in $L^2(\mathbb{R})$. From Parseval's formula we have that 
    \begin{align}
        \int\limits_{-\infty}^{\infty} |\hat{f}(t)|^2 \mathrm{d}t  = T_s \sum_{n \in \mathbb{Z}} & |\hat{f}(nT_s)|^2 , \\
        = T_s \sum_{n \in \mathbb{Z}} |f(nT_s)|^2 & + T_s|\hat{f}(n_0 T_s)|^2 - T_s |f(n_0 T_s)|^2.
    \end{align}
    Since $f(t) \in L^2(\mathbb{R})$ the first and the third terms on the right-hand side are finite and from \eqref{eq:bounded_fhat} we have that $\hat{f}(t) \in \mathbb{L}^2(\mathbb{R})$.

    Next, consider the nonlinear samples of $\hat{f}(t)$. Since the operator $\mathcal{G}_{\lambda}$ is memoryless and $f(nT_s) = \hat f(nT_s), n \in \mathbb{Z}\backslash \{n_0\}$, we have that $\hat{f}_{\lambda}(nT_s) = f_{\lambda}(nT_s)$ for all $n \in \mathbb{Z}\backslash \{n_0\}$. For $n = n_0$ we have the following equalities.
    \begin{align}
        \hat{f}_{\lambda}(n_0 T_s) & = \mathcal{G}_{\lambda} \hat{f}(n_0 T_s), \nonumber\\
        & = g \circ \hat{f}(n_0 T_s), \quad \text{(from \eqref{eq:non-linear-operator} and \eqref{eq:bounded_fhat})}\nonumber\\
        & = g \circ g^{-1} \circ \mathcal{G}_{\lambda}f(n_0 T_s), \quad \text{(from \eqref{eq:construct_sequence} )}\nonumber\\
        & = \mathcal{G}_{\lambda}f(n_0 T_s) = f_{\lambda}(n_0 T_s).
    \end{align}
    This shows that there exists two different bandlimited functions $f(t)$ and $\hat{f}(t)$ whose non-linear samples are identical when measured at the Nyquist rate. This proves that it is necessary to sample above the Nyquist rate.\\

    \noindent \emph{Sufficient Part:} We prove this part by contradiction. Assume that there exist two different bandlimited signals $f_1(t), f_2(t) \in L^2(\mathbb{R})\cap B_{\omega_m}$ with the same non-linear samples, sampled above the Nyquist rate. That is,
    \begin{align}
        \label{eq:non_linear_samples_f12}
        \mathcal{G}_{\lambda}f_1(n T_s) = \mathcal{G}_{\lambda}f_2(n T_s), \, \forall n \in \mathbb{Z}.    
    \end{align}
    Since $f_1(t)$ and $f_2(t)$ are bandlimited and have finite energy, their Fourier transforms have finite energy (From Parseval's theorem) and are absolutely integral (by applying H\"older's inequality). Hence, from the Riemann–Lebesgue lemma, we have that $|f_k(t)| \rightarrow 0$ as $|t| \rightarrow \infty$ for $k = 1, 2$. In other words, for a given $\lambda$ there exist an integer $N_{\lambda}$ such that 
    \begin{align}
        |f_k(nT_s)| < \lambda, \quad \forall |n|> N_{\lambda}, \quad k =1, 2.
        \label{eq:Nlambda}
    \end{align}
    From \eqref{eq:non-linear-operator} and \eqref{eq:non_linear_samples_f12}, for $|n| > N_{\lambda}$ we have that
    \begin{align}
        g \circ f_1(nT_s) & = g \circ f_1(nT_s), \\
        \implies \, f_1(n T_s) & = f_2(nT_s),
        \label{eq:equal_Nlambda}
    \end{align}
        where the last equality is due to the fact that $g$ is invertible. 
        
        Next, consider an $\omega_m$-bandlimited function $h(t) = f_1(t) - f_2(t)$. Since $h(nT_s) = f_1(nT_s) - f_2(nT_s)$, from \eqref{eq:equal_Nlambda} we have that
        \begin{align}
            h(nT_s) = 0, \quad |n|>N_{\lambda}.
            \label{eq:hnTs=0}
        \end{align}
     The DTFT of $h(nT_s)$ is given as    
         \begin{align}
         H(e^{j\omega T_s}) &= \sum_{n={-\infty}}^{\infty}h(n T_s)e^{-\mathrm{j} n T_s \omega } \\ &= 
        \sum_{n={-N_\lambda}}^{N_\lambda}h(n T_s)e^{-\mathrm{j} n T_s \omega }\\ & = \frac{1}{T_s}\sum_{k={-\infty}}^{\infty} H\left(\omega - k\omega_s \right),
    \end{align}
    where $H(\omega)$ is the CTFT of $h(t)$. Since $H(\omega) = 0, \omega \not \in [-\omega_m, \omega_m]$ and $\omega_s > 2\omega_m$, we note that
    \begin{align}
        H(e^{\mathrm{j}\omega}) = \sum_{n={-N_\lambda}}^{N_\lambda}h(n T_s)e^{-\mathrm{j} n T_s \omega } = 0, \,\,\, \text{for} \,\,\, \omega \in [\omega_m, \omega_s/2].
    \label{eq: DTFT_zeros}
    \end{align}
    Since $H(e^{\mathrm{j}\omega})$ is a trigonometric polynomial and it is equal to zero over an interval, then by using the identity theorem \cite[Page 122]{ablowitz_fokas_2003} we have that $H(e^{\mathrm{j}\omega}) = 0$ for all $\omega \in \mathbb{R}$. This implies that $h(nT_s) = 0, $ for all $ n \in \mathbb{Z}$. Hence, $f_1(nT_s)= f_2(nT_s), $ for all $ n \in \mathbb{Z}$ and  therefore $f_1(t) = f_2(t)$, which contradicts our initial assumption. A similar lines of proof is used in \cite{uls_identifiability} to derive sufficient conditions for bandlimited signals from their modulo samples.


\bibliographystyle{IEEEtran}
\bibliography{US_biblios,refs}

\end{document}